\pgfplotsset{compat=1.12}
\newcommand{\mathdash}{\relbar\mkern-9mu\relbar}
\def\BState{\State\hskip-\ALG@thistlm}
\pgfplotsset{
	/pgfplots/area cycle list/.style={/pgfplots/cycle list={%
			{black,fill=yellow!20!white,mark=none},%
			{black,fill=yellow!40!white,mark=none},%
			{black,fill=red!20!white,mark=none},%
			{black,fill=red!40!white,mark=none},
			{black,fill=red, mark = none}
		}
	},
}
\def\@biblabel#1{\hspace*{-\labelsep}}
\DeclareMathAlphabet      {\mathbfit}{OML}{cmm}{b}{it}
\def\BState{\State\hskip-\ALG@thistlm}
\newtheorem{theorem}{Theorem}
\newtheorem{remark}{Remark}
\newtheorem{corollary}{Corollary}
\newtheorem{prob}{Problem}
\newtheorem{lemma}{Lemma}
\newtheorem{exmp}{Example}[section]
\newtheoremstyle{case}{}{}{}{}{}{:}{ }{}
\theoremstyle{case}
\newtheorem{case}{Case}
\renewcommand*\env@matrix[1][\arraystretch]{%
  \edef\arraystretch{#1}%
  \hskip -\arraycolsep
  \let\@ifnextchar\new@ifnextchar
  \array{*\c@MaxMatrixCols c}}
\DeclareMathOperator*{\argmin}{arg\,min}
\title{Time Series Graphical Lasso and Sparse VAR Estimation}
\author{Aramayis Dallakyan \qquad Rakheon Kim \qquad Mohsen Pourahmadi \qquad\\ Texas A\&M University}
\date{}
\begin{document}
	
\maketitle
\sloppy
\singlespacing

\begin{center}
\textbf{Abstract}
\end{center}
\noindent

\noindent

%relies on partial spectral coherence to set to zero certain entries of the autoregression coefficient matrix. Its first stage estimates partial spectral coherence by inverting the spectral density matrix and computing constrained maximum likelihood over an specified grid. The second stage refines non-zero entries of the coefficient matrix using t-statistics and BIC criterion. 
We improve upon the two-stage sparse vector autoregression (sVAR) method in \citet{Davis2016}  by  proposing an alternative  two-stage modified sVAR method which relies on time series graphical lasso to estimate sparse inverse spectral density in the first stage, and the second stage refines non-zero entries of the AR coefficient matrices using a false discovery rate (FDR) procedure. 
Our method has the advantage of avoiding the inversion of the spectral density matrix but has to deal with optimization over Hermitian matrices with complex-valued entries. It significantly improves the computational time with a little loss in forecasting performance.
We study the properties of our proposed method and compare the performance of the two methods using simulated and a real macro-economic dataset. Our simulation results show  that the proposed modification or msVAR is a preferred choice when the goal is to learn
the structure of the AR coefficient matrices while sVAR outperforms msVAR when
the ultimate task is forecasting.

%For a multivariate stationary time series we  propose sparse estimators of
%the inverse  spectral density matrix at  each Fourier frequency    using joint graphical lasso.
%Special attention is given to the selection of the penalty (thresholding) parameters using Stein's unbiased
%risk estimator (SURE) criterion. The
%methodology appears to be
% related  to   the two-stage sparse vector autoregression (sVAR) method in \citet{Davis2016} where
% partial spectral coherence measure is used to set to zeros certain entries of the  autoregression coefficient
% matrices. Connections between sparsity of the inverse spectral density matrix
% and zeros in the AR coefficient matrices are studied. We also modify the two-stage sVAR using the time series Glasso
%  estimator of the inverse spectral density matrix in the first stage, and  the
% false discovery rate (FDR) in the second stage.
%  We illustrate and compare the two
% methods using  simulations and real  data.

\vspace{0.5 cm}
\strut

\noindent\textbf{Keywords:} Time Series Graphical Models, Sparse Vector Autoregression, FDR. % Wirtinger Calculus.

\vspace{0.25 cm}
%\noindent\textbf{JEL:} Q40, Q48
\strut
\vspace{0.5 cm}
\thispagestyle{empty}

\pagebreak
%\onehalfspacing
\doublespacing
\setlength{\parskip}{.85mm plus.25mm minus.25mm}

\section{Introduction}

A vector autoregressive (VAR) model is a powerful tool for analyzing multivariate time series. The recent increase in the availability of time series data pivots the interest of researchers toward high-dimensional VAR models. A common strategy in high-dimensional VAR estimation is to impose regularization on AR coefficient matrices. These methods can be grouped into three different approaches: regularized least square estimators using Lasso-type penalties \citep{Song2011, basu2015, kock2015, Nicholson2016, matteo2019, safikhani2020}; regularized maximum-likelihood estimators \citep{basu2015, Davis2016, yuen2018}, and regularized Yule-Walker estimators using the CLIME or Dantzig estimators \citep{han15a, wu2016, ding2017}.

Regularized least square VAR methods ignore the contemporaneous dependence in the time series since the loss function does not
include the covariance of error terms. \citet{Song2011} discuss the possible impact in fitting a VAR model in which the contemporaneous dependence is ignored.  \citet{Davis2016} numerically show that the forecasting performance of the VAR model improves when
the information on the error covariance matrix is incorporated in the regularized log-likelihood.
They proposed a two-stage approach to fit sparse VAR models. In the first stage, instead of working in a time domain, authors resort to a frequency domain and estimate the partial spectral coherence (PSC) to identify possible non-zero autoregressive coefficients (see Section~\ref{s:psc} for details). Then, using constrained maximum likelihood estimation, parameters are estimated under the sparsity constraint. The lag order $p$ and the number of pairs of non-zero AR parameters $M$ are chosen using the Bayesian Information Criterion (BIC) over the specified grid values of $M$ and $p$. In the second stage, the selected model is refined by identifying spurious non-zero AR coefficients. In particular, for non-zero AR coefficients, a sequence of t-statistics is created, and $m$ of them are chosen using the BIC. The rest of the coefficients are considered spurious and shrunk to zero. It is informative to note that the link between zero PSCs and zero AR coefficients is not exact. We give more details on this relationship in Appendix~\ref{s:arpsc}.

In this paper, we improve the  \citet{Davis2016} framework, by proposing a modification of their two-stage sVAR method, calling it modified sVAR (msVAR), with the following two key distinctions:

1. In msVAR, zeros of PSC are identified by employing time series graphical lasso (TSGlasso) \citep{Jung2015, foti2016, tugnait2018} to estimate the inverse spectral density matrix. The main advantage of such modification is to avoid inversion of a possibly  high-dimensional matrix. However, TSGlasso involves optimization over Hermitian matrices with complex-valued entries which needs a special treatment, see Appendix~\ref{a:wc}.
 %and the costly iterative estimation of the constrained log-likelihood and BIC over a grid. Our simulation results reveal that this modification significantly improves the computational time with a little loss in forecasting performance.

2. We use  FDR in the refinement stage. The impetus of the FDR utilization  is to substitute many pairwise hypothesis tests with a multiple hypothesis testing, which provides a better model selection framework \citep{Benjamini2009, Barber2015}. In Section~\ref{a:compst1}, our  simulation results show the advantages of the FDR refinement in the second stage.

%In addition to the existing criteria, such as AIC, BIC, and extended BIC (eBIC) \citep{foygel2010}, we propose the SURE criterion for tuning parameter selection.
%We empirically explore the performance of the existing criterion, such as AIC, BIC and extended BIC (eBIC) \citep{foygel2010}, as well as propose the SURE criterion for tuning parameter selection. % and empirically show its advantage compared to traditional AIC and BIC criterion. %Besides, compare to Glasso, TSGlasso requires optimization with respect to complex-valued matrix. Despite the ubiquitous use of real-valued loss function of complex-valued variables in signal processing and machine learning, we did not find the formal treatment of complex-valued optimization steps in the statistical literature. The formal treatment is also missing from \citet{Jung2015, foti2016} and  is briefly discussed in \citet{tugnait2018}. Therefore, we resort to Wirtinger calculus \citep{wirtinger1927, brandwood1983, remmert1991} to derive steps of TSGlasso algorithm. For a brief introduction to Wirtinger calculus, see Appendix~\ref{a:wc}.
The remainder of the paper is organized as follows. Section~\ref{s:s2} introduces details on the multivariate time series analysis, VAR, and the two-stage sVAR method. Time series graphical models and TSGlasso are discussed in Section~\ref{s:s3},
where TSGlasso requires tuning parameter selection to control sparseness and  similarity of undirected graphs corresponding to the inverse spectral density matrices across the Fourier frequencies. Section~\ref{s:s4}  provides details of our algorithm for msVAR.
% In Section~\ref{s:s2.x}, we discuss the relationship between SSC and the previous proposals.
In Section~\ref{s:s5}, we study and compare sVAR and msVAR using simulated and a real datasets. For the real data, in addition to sVAR and msVAR models, we consider Bayesian Ridge Regression VAR (BRRVAR) \citep{Banbura}, VAR with Lasso (LASSOVAR) penalty \citep{Song2011}, and VAR with hierarchical componentwise (HVARC) and Own/Other (HVAROO) \citep{Nicholson2016} penalties. 
Finally, we conclude with the discussion in Section~\ref{s:s6}. %Section~\ref{s:s4} provides high-dimensional asymptotic consistency of the SSC method. Finally, we conclude with the discussion in Section~\ref{s:s6}
%%%%%%%%%%%%%%%%%%%%%%%%%%%%%%%%%%%%%%%%%%%%%%%%%%%%%%%%%
%Appendices contain proofs of the results in the paper and some related material.

\section{ Multivariate Stationary VAR Models} \label{s:s2}

In this section, we review  some  basic properties of multivariate stationary processes, their
spectral density matrices, VAR, and sVAR models.

\subsection{Partial Spectral Coherence} \label{s:psc}

 In this part,  we  give a brief introduction to the PSC estimation. A deeper treatment can be found in \citet{Brillinger2001, Brockwell1986}.
Let $\{Y_{t,i}\}$ and $\{Y_{t,j}\}$ be two distinct marginal series of a $K$-variate stationary process $\{Y_t\}$, and $\{ Y_{t,-ij}\}$ denotes the remaining $(K-2)$ marginal processes. The conditional correlation
 between two time series is computed by adjusting for the linear effect
 of the remaining marginal series $\{Y_{t,-ij}\}$. The linear effect of $\{Y_{t,-ij}\}$ is removed from the $\{Y_{t,i}\}$ by determining the optimal
filter $\{D_{k,j}\}$ \citep{Dahlhaus2000}, which minimizes $E(Y_{t,i} - \sum_{k=-\infty}^{\infty}D_{ki}Y_{t-k,-ij})^2.$
%\begin{equation} \label{eq:linfil}
%E(Y_{t,i} - \sum_{k=-\infty}^{\infty}D_{ki}Y_{t-k,-ij})^2.
%\end{equation}
Then, the residual after removing the linear filter is:
\begin{equation} \label{eq:epsil}
\epsilon_{t,i}:= Y_{t,i}- \sum_{k=-\infty}^{\infty}D_{k,i}Y_{t-k,-ij}.
\end{equation}

The residual $\epsilon_{t,j}$ after removing the linear effect from $Y_{t,j}$ can be estimated similarly. Thus, %the conditional correlation between  $\{Y_{t,i}\}$ and $\{Y_{t,j}\}$ is specified
% as the correlation between the two remainder series $\{
%\epsilon_{t,i}\}$ and $\{ \epsilon_{t,j}\}$. That is after removing the linear effect
%of $\{Y_{t,-ij}\}$,
two marginal series $\{Y_{t,i}\}$ and $\{Y_{t,j}\}$
are conditionally uncorrelated, if and only if their residual series  $\{\epsilon_{t,i}\}$
and $\{\epsilon_{t,j}\}$ are uncorrelated at all lags; i.e.,
$\mbox{cor}(\epsilon_{t+k,i},\epsilon_{t,j})=0$ for all $k \in \mathbb{Z}$. In the frequency domain, $\{\epsilon_{t,i}\}$ and
$\{\epsilon_{t,j}\}$ series being uncorrelated at all lags is  equivalent to the cross-spectral density
of  two residual series being zero at all normalized frequencies
$\omega \in[0, 1]$, and the residual (cross)spectral density is defined as the Fourier transform of the autocovariance sequence %, denoted by $\{f_{ij}^{\epsilon}(\omega)\}$,
%is given by
\begin{equation}\label{eq:spdensity}
f_{ij}^{\epsilon}(\omega): = \frac{1}{2\pi}\sum_{k=-\infty}^{\infty}{c_{ij}(k)^{\epsilon}}e^{(-i2\pi k\omega)},\; \omega \in [0,1],
\end{equation}
where $c_{ij}(k)^{\epsilon}=cov(\epsilon_{t+k,i},\epsilon_{t,j})$ is the (cross)autocovariance function of
 two marginal processes in which we tacitly assume $\sum|c_{ij}(k)|<\infty$.
 %It carries information  about the dependence
  %of $\{Y_{t,i}\}$ and $\{Y_{t,j}\}$, and rescaling it leads to the
%definition of
The PSC between two distinct marginal series \citep{Brillinger2001, Brockwell1986} is defined as:
\begin{equation}\label{eq:PSC}
PSC_{ij}(\omega):=\frac{f_{ij}^{\epsilon}(\omega)}{\sqrt{f_{ii}^{\epsilon}(\omega)f_{jj}^{\epsilon}(\omega)}}, \; \omega \in[0,1].
\end{equation}
%It can be seen that the partial spectral coherence and inverse spectral
% density matrix
%In particular, the analogue of the familiar notions of
%partial correlations and inverse covariance
%in multivariate statistics are the partial spectral coherence and inverse spectral
% density matrix

The residual cross-spectral density  $f_{ij}^{\epsilon}(\omega)$
 can be computed from the spectral density $f^{Y}(\omega)$ of the
process $\{Y_t\}$ by
\begin{equation} \label{eq:spd}
f_{ij}^{\epsilon}(\omega) = f_{ij}^{Y}(\omega) - f_{i,-ij}^{Y}(\omega)(f_{-ij,-ij}^{Y}(\omega))^{-1} f_{-ij,j}^{Y}(\omega),
\end{equation}
where $f_{i,-ij}^{Y},\,f_{-ij,-ij}^{Y}$ and $f_{-ij,j}^{Y}$ are some partitions of the spectral density matrix.
(\ref{eq:spd}) involves inverting $(K-2)\times (K-2)$  matrices $f_{-ij,-ij}^{Y}(\omega)$  for $\binom{K}{2}$ pairs,
which is computationally challenging for high dimensional data. \citet{Dahlhaus2000} proposed an efficient method to simultaneously compute PSC
for all
$\binom{K}{2}$ pairs by inverting the $K\times K$ spectral density matrix. Thus, setting $\Theta^{Y}(\omega):= f_{ij}^{Y}(\omega)^{-1}$, the
 PSC can be computed as follows:
\begin{equation}\label{eq:invpsc}
PSC_{ij}(\omega)= -\frac{\Theta_{ij}^{Y}(\omega)}{\sqrt{\Theta_{ii}^{Y}(\omega)\Theta_{jj}^{Y}(\omega)}},
\end{equation}
where $\Theta_{ii}^{Y}(\omega),\, \Theta_{jj}^{Y}(\omega)$  are the $i$th and $j$th diagonal entries and $\Theta_{ij}^{Y}(\omega)$
the $(i,j)$th entry of $\Theta^{Y}(\omega)$. From
(\ref{eq:spdensity}),(\ref{eq:PSC}) and  (\ref{eq:invpsc}), it can be seen that $\{Y_{t,i}\}$ and $\{Y_{t,j}\}$ are conditionally uncorrelated iff
$\Theta_{ij}^{Y}(\omega)=0$ for all $\omega \in [0,1]$. Note that PSC and the inverse spectral density matrix are analogs of familiar notions of partial correlations and inverse covariance matrices in multivariate statistics.

\subsection{VAR and sVAR Models}

 Consider $K$ dimensional VAR model of order p (VAR(p)):
\begin{equation} \label{eq:2.1}
Y_t = a + A_{1}Y_{t-1} + \dots + A_{p}Y_{t-p} + u_t,
\end{equation}
where $Y_t =(y_{1t},\dots,y_{Kt})^{T}$ is a random vector, $A_i$'s are fixed $(K\times K)$
 coefficient matrices, and $a =(a_{1},\dots,a_{K})^T$
is a $(K\times1)$ vector of intercept. The \textit{K}-dimensional white noise is given
by $u_t = (u_{1t},\dots, u_{Kt})^T$ where $E(u_{t})= 0$, $E(u_{t}u_{s}^T) = \Sigma_{u}$
 for $t=s$, and   $E(u_{t}u_{s}^{T}) = 0$ otherwise.
 We further assume that the process is stable, i.e., $\mbox{det}(I_k - \sum_{i = 1}^p A_iz^i) \neq 0$ for $z \in \mathcal{C}, \, |z| \leq 1$.
%The coefficient matrices $(A_1,\dots,A_p)$ in equation (\ref{eq:2.1}) characterize the temporal dependence
%structure of the VAR model and they are chosen
%so that the process is \textit{causal}, namely the future noise values are uncorrelated with the
% past $Y_t$'s.

 Given time series observations $Y_1,\ldots,Y_T$, fitting VAR models amounts to estimating  the lag order $p$ and the coefficient
 matrices $A_1,\ldots,A_p$. However, when $K$ is large or even moderate, the VAR model is over-parametrized since the number of parameters grows quadratically ($K^2p$). Therefore, there is growing interest in developing sparse methods to overcome the computational problem and the interpretation of the model parameters, see for example \citet{Song2011, Davis2016, Nicholson2016, ding2017, yuen2018, safikhani2020}.

\subsubsection{Two-stage sVAR}

In this section, we describe the two-stage sVAR approach introduced in \citet{Davis2016}. Algorithm~\ref{tb:svartable} reports Stage 1 and 2 of the sVAR algorithm. %The first stage of sVAR selects nonzero entries of the AR
% coefficients by screening pairs of distinct marginal series that are conditionally
%correlated given other marginal series. Computation of direction-free conditional correlation
%between components in the time series is done using PSC.

\begin{table}[ht!]
\centering
\captionsetup{name=Algorithm}
\caption{sVAR algorithm}
\label{tb:svartable}
%\scriptsize
\begin{tabular} {p{15cm}}
%\hline
\hline
\textbf{Stage 1}
\begin{enumerate}[noitemsep]
	\item[1.] Invert estimated spectral density matrix and compute the PSC for all $K(K-1)/2$ pairs of distinct marginal series.
	\item[2.] Construct a sequence $Q_1$ by ranking summary statistics $\mathcal{S}_{ij}$'s (see (\ref{eq:Sij})) from highest to lowest.
	\item[3.] For each $(p,M)\in \mathbb{P \times M}$, set the order of autoregression to $p$ and feed the top $M$ pairs in the
sequence $Q_1$ to the VAR model. 		
Estimate parameters under this constraint and compute the corresponding $BIC(p,M)$:
	\begin{equation}
	BIC(p,M) = -2 \log L(A_1,\dots,A_p) + \log T \cdot(K+2M)p
	\end{equation}
	\item[4.] Choosethe number of lags and non-zero pairs $(p^*,M^*)$ that gives the minimum BIC value over $\mathbb{P \times M}$
\end{enumerate}
\textbf{Stage 2}
\begin{enumerate}[noitemsep]
	\item[1.] For each of the non-zero AR coefficient compute the t-statistic via (\ref{eq:tstat}),
	\item[2.] Construct the sequence $Q_2$ of the $(K+2M^*)p^*$ triplets $(i,j,k)$ by ranking $|t_{i,j,k}|$ from highest to lowest,
	\item[3.] For each $m \in \{0,1,\dots,(K+2M^*)p^*\}$ select the $m$ non-zero AR coefficient corresponding to the top triplets in the sequence $Q_2$ and
compute $\mbox{BIC}(m)=-2\log L + \log T \cdot m$,
	\item[4.] Choose the number of non-zero $m^*$ that gives the minimum BIC value: $\mbox{BIC}(m^*)=-2\log L + \log T \cdot m^*$.
\end{enumerate}
\\
\hline
\end{tabular}
\end{table}

\paragraph{Stage 1: Model Selection:}
The first stage exploits PSC to set to zero certain entries of coefficient matrices. More precisely,
  \begin{equation} \label{eq:assert}
  PSC_{ij}(\omega)=0, \; \omega \in [0, 1] \Longrightarrow A_k(i,j)=A_k(j,i)=0, \; k=1,\dots,p .
  \end{equation}
   As discussed, this relationship is only an assertion and is not exact for some cases. See Section~\ref{s:arpsc} for more details.

 Thus, a group of AR coefficient estimates is set to zero if the corresponding PSC estimates are zero.
  %to set a group of AR
%coefficient estimates to zero one finds the pairs of marginal series
%for which their PSCs is identically zero.
However, because of the sampling variability, estimated PSCs
 are not exactly zero even though two marginal series
are conditionally uncorrelated. \citet{Davis2016} overcome this problem by ranking estimated PSCs from largest to smallest and finding a threshold that separates non-zero PSCs in which the supremum of the squared modulus of the estimated PSC is used as summary statistics:
\begin{equation} \label{eq:Sij}
\mathcal{S}_{ij}:=\sup|PSC_{ij}(\omega)|^2,
\end{equation}
where the supremum is taken over all scaled frequencies $\omega$. Thus, a large value of $\mathcal{S}_{ij}$
indicates that two marginal series are conditionally correlated and vice versa.
%\begin{enumerate}
%	\item Invert the estimated spectral density matrix and compute the PSC for all $K(K-1)/2$ pairs of distinct marginal series.
%
%	\item Construct a sequence $Q_1$ by ranking the summary statistics $\mathcal{S}_{ij}$'s from highest to lowest.
%
%	\item For each $(p,M)\in \mathbb{P \times M}$, set the order of autoregression to $p$ and select the top $M$ pairs in the
%sequence $Q_1$ into the VAR model. 		
%Estimate the parameter under this constraint and compute the corresponding $BIC(p,M)$:
%	\begin{equation}
%	BIC(p,M) = -2 \log L(A_1,\dots,A_p) + \log T \dot (K+2M)p
%	\end{equation}
%	\item Choose $(p^*,M^*)$ that gives the minimum BIC value over $\mathbb{P \times M}$
%\end{enumerate}

The output of the first stage algorithm is a model with $(K+ 2M^*)p^*$ non-zero AR coefficients. If
the proportion of selected pairs is small, then the number of non-zero parameters is much
smaller than that for the fully-parametrized VAR$(p^*)$, where the number of parameters is $K^2p^*$. In Step 3, the parameter estimation under the constraint is implemented using constrained log-likelihood estimation described in \citet[Chapter 5]{Lutkepohl2007}.

\paragraph{Stage 2: Refinement:}
In Stage 1, the PSC can only be evaluated for pairs of series, but it does not consider diagonal
entries in $A_1,\dots,A_p$ and  within the group
coefficients for each pair of component series. In other words, Stage 1 may produce
 spurious non-zero AR coefficients. To emancipate the model from spurious
coefficients, in the $(K+2M^*)p^*$ sequence of non-zero AR coefficients, \citet{Davis2016} rank them according to the absolute value
 of their t-statistics, i.e.,
\begin{equation} \label{eq:tstat}
t_{i,j,k}:= \frac{A_k(i,j)}{se(A_k(i,j))},
\end{equation}
where the standard error $\mbox{se}(A_k(i,j))$ is computed from the asymptotic distribution of the constrained maximum likelihood
 estimator \citep{Lutkepohl2007}.
%\begin{enumerate}
%	\item For each of the nonzero AR coefficients compute the t-statistic,
%
%	\item Construct the sequence $Q_2$ of the $(K+2M^*)p^*$ triplets $(i,j,k)$ by ranking $|t_{i,j,k}|$ from highest to lowest,
%
%	\item For each $m \in \{0,1,\dots,(K+2M^*)p^*\}$ select the $m$ nonzero AR coefficients corresponding to the top triplets in the sequence $Q_2$ and
%compute $\mbox{BIC}(m)=-2\log L + \log T \cdot m$,
%
%	\item Choose $m^*$ that gives the minimum BIC value: $\mbox{BIC}(m^*)=-2\log L + \log T \cdot m^*$.
%\end{enumerate}
After the second stage, the procedure leads to a sparse VAR model that contains $m^*$ non-zero AR coefficients, denoted by $\mbox{sVAR}(p^*,m^*)$.

\section{Time Series Graphical Lasso } \label{s:s3}

The salient feature of Gaussian graphical models is to represent conditional independencies among random variables in multivariate data. An undirected graph is a powerful tool for visualizing these relationships where the vertices represent the random variables, and the edge between two vertices indicates the conditional dependence of corresponding variables. For a $K$-dimensional random vector $Y \sim N_K(0, \Sigma)$, a Gaussian graphical model can be constructed from the inverse covariance matrix $\Theta = \Sigma^{-1}$. More precisely, a zero off-diagonal entry of $\Theta_{ij } = 0$ implies that $Y_i$ and $Y_j$ are conditionally independent given all other variables \citep{Whittaker1990}. When $K$ is large, it is reasonable to impose structure or regularize $\Theta$ directly in the search for sparsity \citep{Banerjee2007, Friedman2008}, see \citet{pourahmadi2013} for an overview.

Given the sample data, a regularized  Gaussian graphical model estimation can be formulated as

 \begin{equation} \label{eq: glasso}
 \min_{\Theta} \log \mbox{det}(\Theta) - \mbox{tr}(S\Theta) + P(\Theta, \lambda),
 \end{equation}
where $\mbox{det}(\cdot)$ is determinant of the matrix, $S$ is the sample covariance matrix, $P(\cdot)$ is a penalization term, and $\lambda$ is a tuning parameter. In \citet{Banerjee2007, Friedman2008} the penalization term is $\ell_1$ norm, i.e., $P(\Theta,
\lambda) = \lambda \|\Theta\|_1 = \lambda \sum_{ij}|\theta_{ij}|$. The  Glasso algorithm is extremely popular and has been  extended to  multiple covariance matrices in \citet{Guo2011,Danaher2014} where the
 data from several populations may have a similar graphical structure.

\citet{brillinger1996} and \citet{Dahlhaus2000} have extended the use of graphical models to the multivariate time series setup. Consider $K$ dimensional stationary process $Y_t$, $t = 1, \dots, T$. Let $G = (V, E)$ denote a graph, where each node $v \in V$ corresponds to one of the times series in $Y_t$ and the edge between nodes is characterized by the conditional dependence of the marginal series $Y_i$ and $Y_j$, given the rest $Y_{-ij}$; i.e., $Y_i \mathdash Y_j$ iff $\Theta^{Y}_{ij}(\omega) = 0$, or $\mbox{PSC}(\omega) = 0$ for  $\omega \in [0, 1]$. From now on, whenever there is no confusion in the context, we drop the superscript $Y$ from $\Theta$, and note that $\Theta(\cdot)$ is a Hermitian matrix-valued function with complex-valued entries.

%We start by deriving the log-likelihood function for a stationary Gaussian process to propose a time series extension of the Glasso.

A time series extension of the Glasso requires expressing the log-likelihood function in terms of the discrete Fourier transform of the data and $\Theta(\cdot)$. Define the normalized discrete Fourier transform (DFT) of $K$ dimensional random vector $Y_t$,
\begin{equation}
	d(\omega_n) = \frac{1}{\sqrt{T}} \sum_{t = 0}^{T - 1} Y_t \mbox{exp}(-i2\pi\omega_n t),
\end{equation}
where $i = \sqrt{-1}$, $\omega_n = n / T, \; n = 0,1, \dots, T - 1$. Since $Y_t$ is real-valued, the complex conjugate $d^*(\omega_n) = d(-\omega_n) = d(1 - \omega_n)$ and for $n = 0,1, \dots, T/2$, $d(\omega_n)$ is completely determined for all $n$. Moreover, from \citet[Theorem 4.4.1]{Brillinger2001} as $T \rightarrow \infty$, $d(\omega_n),\; n = 1,2, \dots, (T/2) - 1$ are independent complex Gaussian $N_c(0, f[\omega_n])$ random vectors and for $n = \{0, T/2\}$, $d(\omega_n)$ are independent real Gaussian $N_r(0, f[\omega_n])$. Ignoring $n = \{0, T/2\}$ frequency points, and denoting $f[\omega_n] = f[n],\; \Theta[n] = f^{-1}[n]$,  the joint pdf for $d(\omega_n),\, n = 1, \dots, (T/2) - 1$ is

\begin{equation}\label{eq:pdf}
g(d(\omega_1),\dots, d(\omega_{(T/2) - 1})) = \prod_{n = 1}^{(T/2) - 1} \frac{\mbox{exp}(-d^H(\omega_n)\Theta[n]d(\omega_n))}{\pi^K \mbox{det}(f[n])}
\end{equation}

A standard assumption in spectral density estimation is locally smoothness \citep{Brillinger2001, stoica1997}, i.e., $f[n]$ is approximately constant over $L = 2m_t + 1 \geq K$ consecutive frequency points where $m_t$ is the half-window size. After carefully picking

\[\tilde \omega_l = \frac{(l - 1)L + m_t + 1}{T};\; M =  \Big \lfloor \frac{T/2 - m_t - 1}{L} \Big \rfloor;\; l = 1,2, \dots M,\]

leads to $M$ equally spaced frequencies $\tilde \omega_l$. Therefore, the exploitation of the local smoothness assumption results for $k = -m_t, -m_t + 1, \dots, m_t$
\begin{equation} \label{eq:smooth}
\tilde \omega_{l,k} = \frac{(l - 1)L + m_t + 1 + k}{T};\, f[k] = f[\{l,k\}].
\end{equation}
From (\ref{eq:smooth}) and (\ref{eq:pdf}), the pdf is
\begin{equation} \label{eq:modpdf}
\begin{aligned}
g(d(\omega_1),\dots, d(\omega_{(T/2) - 1})) &= \prod_{n = 1}^{M} \prod_{l = -m_t}^{m_t}  \frac{\mbox{exp}(-d^H(\tilde \omega_{l,n})\Theta[n]d(\omega_{l,n}))}{\pi^K \mbox{det}(f[n])^L} \\
&= \prod_{l = 1}^{M} \frac{\mbox{exp} [-\mbox{tr}(\tilde f[n] \Theta[n] )]}{\pi^{L K} \log \mbox{det} f[n]},
\end{aligned}
\end{equation}
where $\tilde f[n] = \sum_{l= -m_t}^{m_t} d(\tilde \omega_{l,n})d^H(\tilde \omega_{l,n}) / L$ is the sample spectral density matrix whose entries are potentially complex-valued. Thus, the log-likelihood function can be written as

\[ W(\Theta[\cdot]) = \sum_{n = 1}^{M} L \left[\log \mbox{det} (\Theta[n]) - \mbox{tr}(\tilde f[n] \Theta[n]) \right].\]
Analogous to Glasso, we introduce sparsity by minimizing the following regularized log-likelihood
\begin{equation} \label{eq:tslog}
\min_{\Theta[n]} W(\Theta[\cdot]) +   P(\Theta[\cdot], \lambda),
\end{equation}
where
\begin{equation} \label{tsggl:1}
P(\Theta[\cdot], \lambda) =  \lambda \sum_{i\neq j} \sqrt{\sum_{n=1}^{N} |\Theta_{ij}[n]|^2 }, \; \Theta[\cdot] = \{\Theta[1], \dots, \Theta[N]\},
%\lambda_1  \sum_{n  = 1}^{N} \sum_{i \neq j}|\Theta[n]_{ij}| +
\end{equation}

 As in \citet{Jung2015}, we appeal to the alternating direction method of multipliers (ADMM)  \citep{Boyd2011} for minimization. However, in the
 following we pay due attention to the fact that the entries of $\Theta$ are complex-valued. The ADMM minimizes the scaled augmented Lagrangian % \footnote{Note that for complex case we have $\rho$ instead of $\rho/2$. We have written it to avoid $\frac{1}{2}$ factor later.}
\begin{equation} \label{eq:14}
\begin{split}
L_\rho(\Theta[\cdot],Z[\cdot],U[\cdot])&= \sum_{n=1}^{M} L \left[ -\log \mbox{det}(\Theta[n]) + \mbox{tr}(\tilde{f}[n] \Theta[n])) \right] + \lambda P(Z[\cdot]) \\
& +{\rho / 2}\sum_{n =1}^{M} \|\Theta[n]-Z[n]+U[n]\|_F^2,
\end{split}
\end{equation} subject to $\Theta[n] = Z[n]$ and  $\|X[n]\|^2_F = \sum_{i j} |X_{ij}[n]|^2$ for $n =1,\dots, M$.
Given $(\Theta^{(k)}[\cdot],Z^{(k)}[\cdot],U^{(k)}[\cdot])$  matrices in the $k$th iteration, the ADMM algorithm implements the following three updates for the next ($k +1 $) iteration:
\begin{enumerate}
	\item[(a)]\label{step1}  %Since the objective function is separable in $\Theta(n)$, this step iterates $N$ separate minimization:
%	\begin{equation}
$	\Theta^{(k+1)}[\cdot] \leftarrow \argmin_{\Theta[\cdot]} L_\rho(\Theta[\cdot],Z^{(k)}[\cdot],U^{(k)}[\cdot])$
%	\end{equation}
%	The derivative of $L_\rho$ does not contain $P(Z)$ and a closed form solution can be derived using eigen decomposition. Note that $N$ separate minimization can be estimated parallel.
	\item[(b)] \label{step2}
%	\begin{equation}\label{eq:gls2}
	$Z^{(k+1)}[\cdot] \leftarrow  \argmin_{Z[\cdot]}  L_\rho(\Theta^{(k+1)}[\cdot],Z[\cdot],U^{(k)}[\cdot])$
%	\end{equation}
%	This is equivalent to the optimization of
%	\begin{equation}\label{eq:15}
%		\min 2\lambda(\sum_{n=1}^{N}|Z_{i,j}[n]|^2)^{1/2} + \rho \sum_{n=1}^{N}|\Theta_{i,j}[n] + U_{i,j}[n] - Z_{i,j}[n]|^2
 %	\end{equation}
 		\item[(c)] \label{step3} $U^{(k+1)}[\cdot] \leftarrow U^{(k)}[\cdot]+(\Theta^{(k+1)}[\cdot]- Z^{(k + 1)}[\cdot])$
\end{enumerate}
It is timely and instructive to note that unlike the formulation in (\ref{eq: glasso}) \citep{Banerjee2007, Friedman2008, Danaher2014} where $\Theta$ is real-valued, here we have to deal with a complex-valued $\Theta[\cdot]$ in (\ref{eq:14}). While \citet{li2015} establish steps for the complex-valued ADMM when the penalty function is $\ell_1$ norm, here we resort to  Wirtinger calculus \citep{wirtinger1927,brandwood1983}, coupled with the definition of Wirtinger subgradients \citep{bouboulis2012},  to derive updates (a)-(c) for matrices with complex entries. Details are relegated to Appendix~\ref{a:wc}.

%It is instructive to note that $L_\rho(\Theta[\cdot],Z[\cdot],U[\cdot])$ is separable in $n$ and the update of (a) can be implemented parallel by minimizing
%$J(\Theta[n]),\, n = 1, \dots, M$, where
%\begin{equation} \label{eq:jtheta}
%J(\Theta[n]) =  L \left[-\log \mbox{det}(\Theta[n]) + \mbox{tr} (\tilde f[n]\Theta[n]) \right] + \rho/2 \|\Theta[n] - Z^{(k)}[n] + U^{(k)}[n]\|^2_F
%\end{equation}
%The update of (a) follows as in \citet[Section 6.5]{Boyd2011}. Let the eigen-decomposition of the matrix $\rho(Z^{(k)}[n] - U^{(k)}[n]) - \tilde f[n]$ be $V_n C_n V^H_n $. Then
%\begin{equation} \label{eq:eigen}
%\Theta^{(k + 1)} = V_n \tilde C_n V^H_n ,
%\end{equation}
%where $\tilde C_n$ is the diagonal matrix with $j$th diagonal element $(\tilde C_n)_{jj} = (L/2\rho) (-(C_n)_{jj} + \sqrt{(C_n)_{jj} +4\rho/L})$. For details, see Appendix~\ref{s:solveAB}.
%
%On the other hand, the update (b) for $i \neq j$
%
%\begin{equation}
%	Z^{(k + 1)}_{ij}[n] = S_{\lambda/ \rho}(\Theta^{(k+1)}_{ij}[n] + U_{ij}^{(k)}[n])
%\end{equation}
%where $S_{\mu}(a_i) = (1 - \mu / \|a\|_2)a_i$, and $Z^{(k + 1)}_{ij}[n] = \Theta^{(k+1)}_{ii}[n]$ for $i = j$, since we do not penalize diagonal elements (see Appendix~\ref{a:wc}).
\citet[Section 3.1]{Boyd2011} showed that for a given $\rho$, the convergence of iterates to the global minimum is guaranteed. The choice of $\rho$ controls the speed of convergence. \citet[Section 3.4.1]{Boyd2011} discuss the adaptive choice of $\rho$ to improve convergence. On the statistical side, \citet{Jung2015} provide an upper bound on the support recovery of the TSGlasso.
%\begin{remark}
%Notice that unlike the formulation in \citet{Banerjee2007, Friedman2008, Danaher2014} where $\Theta$ is real-valued, we have a complex-valued $\Theta[\cdot]$ in (\ref{eq:14}). Thus, we formalize the approach by employing Wirtinger calculus \citep{brandwood1983}, coupled with the corresponding definition of subgradients \citep{bouboulis2012} to derive updates for (a)-(c) steps described above. Details are relegated to the Appendix~\ref{a:wc}.
%\end{remark}

\subsection{Tuning Parameter Selection for TSGlasso} \label{s:tune}
In the TSGlasso algorithm, the tuning parameter $\lambda$ controls the sparsity and the similarity of the estimated undirected graphs over the scaled frequencies.
 In this section, we review some classical methods for tuning parameter selection. % and propose Stein's Unbiased Risk Estimate (SURE) criterion for TSGlasso.

Ideally, the selected tuning parameter should produce an undirected graph that is sufficiently complex to be interesting, sufficiently sparse to be interpretable, and, more importantly, should be supported by data. The traditional approaches such as the Akaike information criterion (AIC), Bayesian information criterion (BIC) and cross-validation tend to choose graph that is too large \citep{meinshausen2010}. \citet{homrighausen2018} empirically showed that for the penalized regression, the tuning parameter, selected from the Stein unbiased risk estimator (SURE)-type criterion, tends to perform better than other considered criteria. For graphical models, extended BIC, introduced in \citet{foygel2010}, shows practical superiority compare to discussed criteria. %In Section~\ref{s:s5}, we empirically show the usefulness of each criterion for TSGlasso over various settings.

 From (\ref{eq:modpdf}), the AIC approximation for the time series graphical model is:
\begin{equation} \label{par:1}
\mbox{AIC}(\lambda) = \sum_{n=1}^{M} \left[ - \log \mbox{det}\hat \Theta_{\lambda}[n] + \mbox{tr}(\tilde{f}(n) \hat \Theta_{\lambda}[n]) \right] \times L + 2E_n,
\end{equation}
 where $\hat \Theta_{\lambda}[n]$ is the estimated inverse spectral density at tuning parameter $\lambda$, and $E_n$ is the number of non-zero elements in $\hat \Theta_{\lambda}[n]$. Using AIC, we choose $\lambda$ which gives the minimum value of (\ref{par:1}). Similarly, an approximation of the extended BIC is:
\begin{equation} \label{par:2}
\mbox{eBIC}(\lambda) = \sum_{n=1}^{M} \left[ - \log|\hat \Theta_{\lambda}[n]| + tr(\tilde{f}[n] \hat \Theta_{\lambda}[n]) \right] \times L + \log(L) E_n + 4 E_n \gamma \log(K),
\end{equation}
with a hyper-parameter $\gamma \in[0,1]$. If $\gamma =0$, then (\ref{par:2}) reduces to the classical BIC. The higher value of $\gamma$ leads to the stronger penalization of large graphs. For the moderate and large values of $K$, \citet{foygel2010} suggest $\gamma = 0.5$.

\section{Modified  sVAR} \label{s:s4}

In this section, we introduce our msVAR procedure and highlight the key differences with the sVAR. Algorithm~\ref{tb:msvartable} summarizes the proposed modifications. 

\begin{table}[ht!]
\captionsetup{name=Algorithm}
\centering
\caption{msVAR algorithm}
\label{tb:msvartable}
%\scriptsize
\begin{tabular} {p{15cm}}
%\hline
\hline
\textbf{Stage 1}
\begin{enumerate}[noitemsep]
	\item[1.] Estimate the inverse spectral density matrix using  TSGlasso, and let  $M^*$ be its number of non-zero elements (see Section~\ref{s:s3}).
	\item[2.] Estimate the AR parameters under the zero constraint and choose the number of lags ($p^*$) by minimizing
	\[BIC(p) = - 2 \log L(A_1, \dots, A_p) + \log T(K + 2M^*)p\]
\end{enumerate}
\textbf{Stage 2}
\begin{enumerate}[noitemsep]
	\item[1.] For each of the non-zero AR coefficient compute the t-statistic and p-value.
	\item[2.] Choose $m^*$ non-zero coefficients that reject hypothesis in FDR procedure with the threshold value of FDR-corrected significance $q$.
\end{enumerate}
\\
\hline
\end{tabular}
\end{table}

The first stage of our Algorithm is designed to avoid the costly matrix inversion and grid search procedure to compute constrained MLE of the AR parameters. We substitute Steps 1 - 3 of the sVAR's Stage 1 (see Algorithm~\ref{tb:svartable}) by the TSGlasso algorithm, with the ensuing $M^*$ non-zero elements, and use BIC only once  to choose the number of lags $p^*$ compare to sVAR (see Step 3 in Stage 1 in Algorithm~\ref{tb:svartable}).

%\begin{itemize}
%	\item Estimate spectral density matrix from TSGlasso and choose $M^*$.
%	\item Estimate the parameters under this constraint and choose (p^*) from
%	\[BIC(p) = - 2 \log L(A_1, \dots, A_p) + \log T*(K + 2M^*)p\]
%\end{itemize}

%The first stage of msVAR implements TSGlasso where the large computational expense relies on eigenvalue decomposition (\ref{eq:eigen}). Our simulation results show that for low dimensional
%data sets, the estimation time of this method might be higher compared to directly inverting
%the spectral density matrix for the  two-stage sVAR. However, for large values of $K$, the use of TSGlasso is justified.

In the second stage,  instead of using  t-statistics of the AR coefficients, our algorithm relies on the FDR \citep{Benjamini1995} procedure for further refinement. There is a  rich literature on the use of FDR for model selection, for example, see \citet{Benjamini2009, Barber2015, GSell2016}, etc.  The advantage of FDR utilization in the second stage is twofold: First, instead of pairwise t-statistics, we implement multiple hypothesis testing. Second, it eliminates the need for Step 3 in Stage two of sVAR (see Algorithm~\ref{tb:svartable}). The empirical analysis in Appendix~\ref{a:compst1} further shows the advantages of using FDR for refinement.

\section{Numerical Results} \label{s:s5}
In this section, we use simulated and real datasets to compare the performance of msVAR and sVAR models. Our analyses indicate that msVAR is a preferred choice when the goal is to learn the structure of the coefficient matrix. On the other hand, sVAR outperforms msVAR when the ultimate task is forecasting. The analysis in Section~\ref{s:rtime} shows that the proposed modifications significantly improve the computation time of the algorithm.

\subsection{Comparing the msVAR and sVAR Models}
\subsubsection{Evaluation Measures and Visualization}
	
The following metrics are computed to compare the performance of the two methods:
\begin{itemize}[noitemsep]
	\item the squared bias of the AR coefficient  estimates:
	\[\sum\limits_{k=1}^{max(p,\hat{p})}\sum\limits_{i,j=1}^{K}[\mathbb{E}[\hat{A_k}(i,j)]-A_k(i,j)]^{2};\]
	\item the variance of the estimated AR coefficient:\[\sum\limits_{k=1}^{max(p,\hat{p})}\sum\limits_{i,j=1}^{K}var(\hat{A_k}(i,j));\]
	\item the mean square error (MSE) of the AR coefficient estimates:
    \[\sum\limits_{k=1}^{max(p,\hat{p})}\sum\limits_{i,j=1}^{K}\{[\mathbb{E}[\hat{A_k}(i,j)]-A_k(i,j)]^{2} +var(\hat{A_k}(i,j))\}.\]
    \item the true positive rate (TPR): estimates the ratio between the number of correctly found edges in estimated graph and the number of true edges in the true graph.
    \item the false positive rate (FPR): estimates  the ratio between the number of incorrectly found edges in estimated graph and the number of true missing edges in the true graph.
\end{itemize}

In addition, we utilize a \citet{eichler2012}'s  proposal to visualize an estimated VAR model using a mixed graph. The edge set $E_m$ of the mixed graph $G_m = (V_m, E_m)$ consists of directed and undirected edges, in which

\begin{itemize}[noitemsep]
	\item $i \rightarrow j \notin E_m$ whenever $A_k(i,j) = 0,\, k = 1, \dots, p$
	\item $i \mathdash j \notin E_m$ whenever $(\Theta_u)_{ij} = (\Theta_{u})_{ji} = 0.$
\end{itemize}

In other words, the directed edge is in the edge set whenever $Y_i$ is Granger-causal for $Y_j$ \citep{Lutkepohl2007}, and an undirected edge is in the edge set whenever $Y_i$ and $Y_j$ are
contemporaneously conditionally dependent. However, for the sake of clarity, we present only the directed part of the mixed graph as in Figures~\ref{fig:m1}-\ref{fig:m3}.

%For the sake of clarity, we do not combine the directed and undirected graphs into one plot, instead illustrate them side by side.
\subsubsection{The Simulation Setup}
	
In the simulation study, we consider three different \textit{stable}  VAR models to compare performance of sVAR and msVAR methods.

\begin{itemize}

\item[Model 1:] $y_t = A_1 y_{t-1} + u_t$, and $K = 10$ \\

\[
\scriptsize{
A_1 = \begin{bmatrix}[0.5]
 $0$ & $0$ & $0$ & $0$ & $0$ & $0$ & $0$ & $0.3$ & $0$ & $0$ \\
$0$ & $0$ & $0.1$ & $0$ & $0$ & $0$ & $0.4$ & $0$ & $0$ & $0.4$ \\
$0$ & $0.6$ & $0$ & $0$ & $0$ & $0$ & $0$ & $0$ & $0$ & $0$ \\
$0$ & $0.2$ & $0$ & $0$ & $0.5$ & $0$ & $0$ & $0$ & $0$ & $0$ \\
$0$ & $0.3$ & $0$ & $0.1$ & $0$ & $0$ & $0.2$ & $0.1$ & $0.3$ & $0.5$ \\
$0.2$ & $0$ & $0$ & $0$ & $0.4$ & $0$ & $0$ & $0$ & $0$ & $0$ \\
$0$ & $0$ & $0$ & $0$ & $0$ & $0$ & $0$ & $0$ & $0$ & $0.6$ \\
$0$ & $0$ & $0$ & $0$ & $0$ & $0.6$ & $0$ & $0$ & $0$ & $0$ \\
$0.2$ & $0$ & $0$ & $0$ & $0$ & $0$ & $0$ & $0$ & $0.2$ & $0$ \\
$0$ & $0$ & $0$ & $0$ & $0.4$ & $0$ & $0$ & $0$ & $0$ & $0$ \\
	\end{bmatrix},
\Theta_{u} = \begin{bmatrix}[0.5]
\delta & \delta /2 &\dots & \delta /10 \\
\delta / 2 & $1$ & \hdots & $0$ \\
\vdots & \vdots & \ddots & $0$ \\
\delta / 10 & $0$ & \hdots & $1$  \\
	\end{bmatrix}}
	,\]
	where $\delta = 0.5$ .
%\[\Theta_{u} = \begin{bmatrix}[0.5]
%$0.50$ & $0.25$ & $0.17$ & $0.12$ & $0.10$ & $0.08$ & $0.07$ & $0.06$ & $0.06$ &$0.05$ \\
%$0.25$ & $1$ & $0$ & $0$ & $0$ & $0$ & $0$ & $0$ & $0$ & $0$ \\
%$0.17$ & $0$ & $1$ & $0$ & $0$ & $0$ & $0$ & $0$ & $0$ & $0$ \\
%$0.12$ & $0$ & $0$ & $1$ & $0$ & $0$ & $0$ & $0$ & $0$ & $0$ \\
%$0.10$ & $0$ & $0$ & $0$ & $1$ & $0$ & $0$ & $0$ & $0$ & $0$ \\
%$0.08$ & $0$ & $0$ & $0$ & $0$ & $1$ & $0$ & $0$ & $0$ & $0$ \\
%$0.07$ & $0$ & $0$ & $0$ & $0$ & $0$ & $1$ & $0$ & $0$ & $0$ \\
%$0.06$ & $0$ & $0$ & $0$ & $0$ & $0$ & $0$ & $1$ & $0$ & $0$ \\
%$0.06$ & $0$ & $0$ & $0$ & $0$ & $0$ & $0$ & $0$ & $1$ & $0$ \\
%$0.05$ & $0$ & $0$ & $0$ & $0$ & $0$ & $0$ & $0$ & $0$ & $1$ \\
%	\end{bmatrix}\]
The setup of the simulation is borrowed from the \citet[Section 4]{Davis2016}.
%The purpose of this setup is to compare methods based on how well they capture sparsity of coefficient matrix when the true coefficient matrix has
%zero elements but corresponding entries of inverse spectral density is not zero, i.e.
%\begin{equation} \label{eq:cnd1}
% 	A_1[i, j] = 0\; \mbox{and} \; \mbox{PSC}_{ij}(\omega) \neq 0,\, \omega \in [0,1].
%\end{equation}
See Figure~\ref{fig:m1} for illustration. %The red undirected edges indicates entries in inverse spectral density that satisfy  condition~(\ref{eq:cnd1}).

\item[Model 2:] $y_t = A_1 y_{t-1} + u_t$, and $K = 6$ \\
\[
\scriptsize{
A_1 =
	\begin{bmatrix}[0.5]
	$0$ & $0.50$ & $0.50$ & $0.20$ & $0$ & $0$ \\
	$0$ & $0$ & $0.30$ & $0$ & $0$ & $0$ \\
	$0$ & $0.25$ & $0.50$ & $0$ & $0$ & $0$ \\
	$0$ & $0$ & $0$ & $0$ & $0.33$ & $0.33$ \\
	$0$ & $0$ & $0$ & $0$ & $0$ & $0.20$ \\
	$0$ & $0.50$ & $0$ & $0$ & $0.17$ & $0.33$ \\
	\end{bmatrix},	
\Theta_u =
	\begin{bmatrix}[0.5]
	$0.17$ & $0$ & $0.25$ & $0.030$ & $0$ & $0$ \\
	$0$ & $1.40$ & $0.34$ & $0.25$ & $0.04$ & $0.58$ \\
	$0.25$ & $0.34$ & $0.55$ & $0.05$ & $0$ & $0$ \\
	$0.03$ & $0.25$ & $0.05$ & $0.26$ & $0$ & $0.42$ \\
	$0$ & $0.04$ & $0$ & $0$ & $1.51$ & $0.36$ \\
	$0$ & $0.58$ & $0$ & $0.42$ & $0.36$ & $0.98$ \\
	\end{bmatrix}}
\]

The purpose of this setup is to compare methods when for some entries of $A_1$ the assertion~(\ref{eq:assert}) is violated
\begin{equation} \label{eq:cnd2}
	\mbox{PSC}_{ij}(\omega) = 0,\, \omega \in [0,1]\; \mbox{and}\;  A_1[i, j] \neq 0\;
\end{equation}
%and some other entries satisfy condition~(\ref{eq:cnd1}).
See Figure~\ref{fig:m2} for illustration.  %As for Model 1, the red undirected edges indicates entries in the inverse spectral density that satisfy  condition~(\ref{eq:cnd1}). On the other hand,
Red directed edges indicate entries of $A_1$ that violate assertion~(\ref{eq:assert}).
\begin{remark}
In general,  for any element of $A_1$, enforcement of (\ref{eq:cnd2}) is not a trivial task. Values and structure of coefficient matrices $A_1$ and $\Theta_u$ are carefully constructed such that $A_1[1,2]$ and $A_1[4,5]$ satisfy condition (\ref{eq:cnd}) located in the Appendix.
\end{remark}

\item[Model 3:] $y_t = A_1 y_{t-1} + A_2 y_{t-2} + u_t$, and $K = 6$ \\

\[
\scriptsize{
A_1 =
	\begin{bmatrix}[0.5]
		$$-$0.6$ & $0.4$ & $0$ & $0$ & $0$ & $0.4$ \\
		$0.4$ & $$-$0.6$ & $0.4$ & $0$ & $0$ & $0$ \\
		$0$ & $0.4$ & $$-$0.6$ & $0.4$ & $0$ & $0$ \\
		$0$ & $0$ & $0.4$ & $$-$0.6$ & $0.4$ & $0$ \\
		$0$ & $0$ & $0$ & $0.4$ & $$-$0.6$ & $0.4$ \\
		$0.4$ & $0$ & $0$ & $0$ & $0.4$ & $$-$0.6$ \\
	\end{bmatrix},
A_2 =
	\begin{bmatrix}[0.5]
		$$-$0.3$ & $0.2$ & $0$ & $0$ & $0$ & $0.2$ \\
		$0.2$ & $$-$0.3$ & $0.2$ & $0$ & $0$ & $0$ \\
		$0$ & $0.2$ & $$-$0.3$ & $0.2$ & $0$ & $0$ \\
		$0$ & $0$ & $0.2$ & $$-$0.3$ & $0.2$ & $0$ \\
		$0$ & $0$ & $0$ & $0.2$ & $$-$0.3$ & $0.2$ \\
		$0.2$ & $0$ & $0$ & $0$ & $0.2$ & $$-$0.3$ \\
	\end{bmatrix}}
\]

\[	
\scriptsize{
\Theta_u =
	\begin{bmatrix}[0.5]
		1 & -0.3 & 0 & 0 & 0 & -0.3 \\
		-0.3 & 1 & -0.3 & 0 & 0 & 0 \\
		0 & -0.3 & 1 & -0.3 & 0 & 0 \\
		0 & 0 & -0.3 & 1 & -0.3 & 0 \\
		0 & 0 & 0 & -0.3 & 1 & -0.3 \\
		-0.3 & 0 & 0 & 0 & -0.3 & 1 \\ 	
	\end{bmatrix}}
	\]

This setup is borrowed from \citet[Section 4]{yuen2018}. The directed graph is illustrated in Figure~\ref{fig:m3}.

\end{itemize}

For each model, the corresponding multivariate time series is generated following \citet[Appendix D1]{Lutkepohl2007} over the 50 replications. For all models, $T = 100$ and for the tuning parameter selection in Stage 1, we only report  results for the eBIC since BIC and AIC provide similar outcomes. For Stage 2 of the msVAR, the threshold value of FDR-corrected significance is fixed at $q= 0.1$.

\subsubsection{The Simulation Result}
Figures~\ref{fig:m1} - \ref{fig:m3} and Table~\ref{t:metrics} report the simulation results for Models 1 - 3, respectively. In each figure, top left directed graph corresponds to the true case and top right and bottom left to the msVAR and sVAR, respectively. The width and color shade of the estimated edges correspond to the proportion of times the edge was detected out of 50 replications; i.e., the darker and thicker the edge, more frequently  it was present and vice versa. In Figure~\ref{fig:m2}, red directed edges correspond to the condition~(\ref{eq:cnd2}). In Table~\ref{t:metrics}, for each method, the minimum of $\mbox{Bias}^2$, Variance, MSE and FPR metrics, and the maximum of TPR are highlighted.

From Figures~\ref{fig:m1} - \ref{fig:m3}, the visual comparison reveals that both msVAR and sVAR were able to detect true edges for most of the time. For Model 1, msVAR indicates better result on estimating true edges than sVAR. For example, sVAR failed to detect the edge $Y_5 \longleftrightarrow Y_{10}$ in all repetitions, while msVAR detected it around 90\% of time. The first two rows in Table~\ref{t:metrics} document the five metrics comparison for Model 1. It can be seen, that msVAR is the best for all metrics.

A similar result holds for Model 2. msVAR shows small bias but higher variance and the best TPR result. More importantly, both algorithms were able to detect the edges $Y6 \longrightarrow Y2$ and $Y1 \longrightarrow Y2$ most of the time, even thought the assertion (\ref{eq:assert}) was violated. For Model 3, the performance is reversed compared to Model 2, i.e., msVAR shows slightly higher bias but smaller variance.

\begin{figure}[htp]
\centering
\includegraphics[width=0.7\textwidth, height = 10cm]{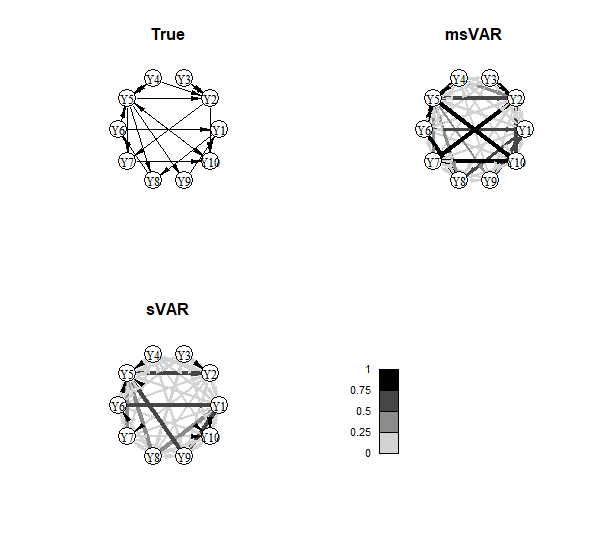} %}
\caption{Model 1 simulation result. The width and color shade of estimated edges indicate the proportion of the number of times the edge was detected out of 50 replication. }
\label{fig:m1}
\end{figure}

\begin{figure}[htp]
\centering
%\subfloat[True model.]{
%	\label{subfig:correct}
	\includegraphics[width=0.7\textwidth, height = 9.5cm]{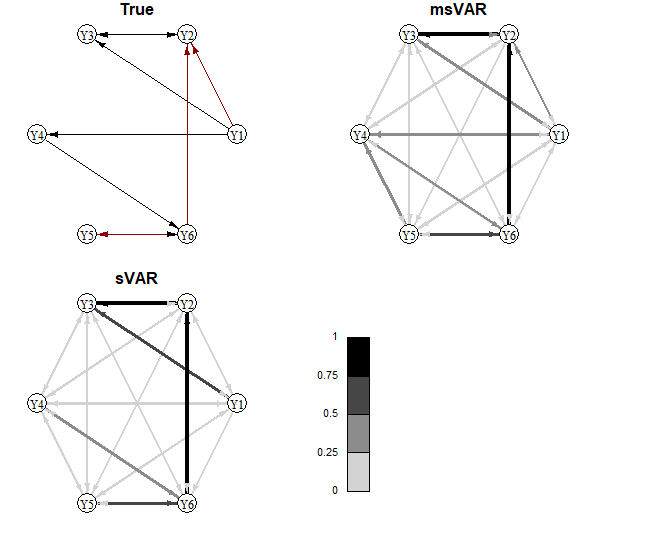} %}
%\begin{subfigure}
%\centering
%\includegraphics[width=1\linewidth]{}
%\caption{True graph}
%\label{fig:m1true}
%\end{subfigure}
%\begin{subfigure}{.5\textwidth}
%\centering
%\includegraphics[width=1\linewidth]{infnormT150}
%\caption{Inf. norm }
%\label{fig:nnsub2}
%\end{subfigure}
\caption{Model 2 simulation result. Red directed edges in the true graph correspond to the condition (\ref{eq:cnd2}). The width and color shade of estimated edges indicate the proportion of the number of times the edge was detected out of 50 replication.}
\label{fig:m2}
\end{figure}

\begin{figure}[htp]
\centering
%\subfloat[True model.]{
%	\label{subfig:correct}
	\includegraphics[width=0.7\textwidth, height = 9.5cm]{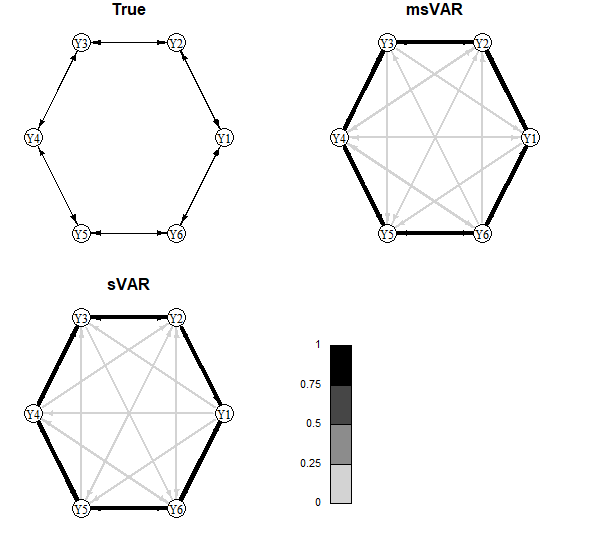} %}
%\begin{subfigure}
%\centering
%\includegraphics[width=1\linewidth]{model1true}
%\caption{True graph}
%\label{fig:m1true}
%\end{subfigure}
%\begin{subfigure}{.5\textwidth}
%\centering
%\includegraphics[width=1\linewidth]{infnormT150}
%\caption{Inf. norm }
%\label{fig:nnsub2}
%\end{subfigure}
\caption{Model 3 simulation result. The width and color shade of estimated edges indicate the proportion of the number of times the edge was detected out of 50 repetition.}
\label{fig:m3}
\end{figure}

\begin{table}[!htbp] \centering
\setcounter{table}{0} \renewcommand{\thetable}{\arabic{table}}
  \caption{Five metrics from the sVAR and msVAR methods.}
  \label{t:metrics}
\begin{tabular}{@{\extracolsep{5pt}} ccccccc}
\\[-1.8ex]\hline
\hline \\[-1.8ex]
& Method & $\mbox{Bias}^2$ & Variance & MSE & TPR & FPR \\
\hline \\[-1.8ex]
\multirow{2}{*}{Model 1}
& sVAR & 0.178 & 0.788 & 0.966 & 0.595 & 0.03\\
&msVAR & \textbf{0.174} & \textbf{0.732} & \textbf{0.906} & \textbf{0.632}& \textbf{0.02} \\
\hline \\[-1.8ex]
\multirow{2}{*}{Model 2}
&sVAR & 0.508 & \textbf{0.609} & \textbf{1.117} & 0.477& \textbf{0.08} \\
&msVAR & \textbf{0.303} & 0.951 & 1.254 & \textbf{0.537}& 0.118\\
\hline \\[-1.8ex]
\multirow{2}{*}{Model 3}
&sVAR & \textbf{0.089} & 0.762 & 0.851 & \textbf{0.954} & \textbf{0.03}\\
&msVAR & 0.102 & \textbf{0.742} & \textbf{0.844} & 0.946 & 0.05 \\
\hline \\[-1.8ex]
\end{tabular}
\end{table}

\subsubsection{Running Time Comparison} \label{s:rtime}
In this section, we compare relative running times for sVAR and msVAR methods. For this exercise, we fixed $p  = 1$, and for $K = 15,25,50,75$, generate a sparse coefficient matrix $A$ such that the probability of having a non-zero element is equal to 0.25. Then, the coefficient matrix is rescaled to satisfy the stability condition. For the msVAR method, we select two options for time comparison: msVAR with tuning parameter selection and without, respectively. For brevity, we call those methods \textit{msVAR with} and \textit{msVAR without}.  In the former case, the tuning parameter is selected over 20 equally spaced values located in $(0,1)$ interval, and for the latter case, the tuning parameter is fixed to $\lambda = 0.2$. The relative time is reported with respect to the running time of the \textit{msVAR without}. The modified \texttt{R} code for the msVAR algorithm relies on the sVAR code framework provided in \citet{Davis2016}.  Table~\ref{t:comptime} reports relative times for $K = 15,25,50,75$.

\begin{table}[!htbp]
\centering
  \caption{Relative running times for the sVAR and msVAR algorithm. Running times are reported relative to the msVAR algorithm. }
  \label{t:comptime}
\begin{tabular}{@{\extracolsep{5pt}} cccc}
\\[-1.8ex]\hline
\hline \\[-1.8ex]
 K& msVAR with & msVAR without & sVAR \\
\hline \\[-1.8ex]
15 & $10.03$ & $1$ & $4.27$ \\
25 & $4.91$ & $1$ & $38.80$ \\
50 & $1.02$ & $1$ & $68.50$ \\
100 & $1.01$ & $1$ & $-^*$ \\
\hline \\[-1.8ex]
\footnotesize {\textit{Note:}}& \multicolumn{3}{l}{\footnotesize {$^*$ The algorithm was terminated after 24 hours.}}
\end{tabular}
\end{table}

It can be seen when $K = 15$, the \textit{msVAR without} is the fastest, followed by the sVAR, which is nearly $4.3$ times slower than the \textit{msVAR without}. Finally, the \textit{msVAR with} is almost 10 times slower than the \textit{msVAR without}. However, sVAR becomes extremely slow as $K$ grows. For large $K$, running times for the \textit{msVAR without} and \textit{msVAR with} are almost indistinguishable. The result can be explained by observing that in both methods, the computationally expensive procedure is the restricted MLE estimation, and msVAR is fast since it implements it once, compared to sVAR's grid search approach. Moreover, the computational expense of restricted MLE estimation overshadows the computation time of the tuning parameter selection as $K$ grows.

\subsubsection{Comparing msVAR stage 1 and stage 2 outputs}\label{a:compst1}
In this section, we compare msVAR stage 1 and stage 2 outputs. Recall that in stage 2 of msVAR, we use the FDR procedure for edge selection.
For comparison, we use Model 1, described in Section~\ref{s:s5}, to generate the dataset. Table~\ref{t:compst1} and Figure~\ref{fig:compst1} report the results. Results from both table and figure indicate the performance improvement after the FDR refinement in stage 2.

 \begin{table}[ht!]
 \centering
  \caption{Three metrics from the msVAR and msVAR St.1 methods.}
  \label{t:compst1}
\begin{tabular}{@{\extracolsep{5pt}} cccc}
\\[-1.8ex]\hline
\hline \\[-1.8ex]
 & Bias$^2$ & Variance & MSE \\
\hline \\[-1.8ex]
msVAR St.1 & 0.524 & 1.249 & 1.774 \\
msVAR & \textbf{0.508} & \textbf{0.925} & \textbf{1.433} \\
\hline \\[-1.8ex]
\end{tabular}
\end{table}

\begin{figure}[htp!]
\centering
\includegraphics[width=0.8\textwidth, height = 10cm]{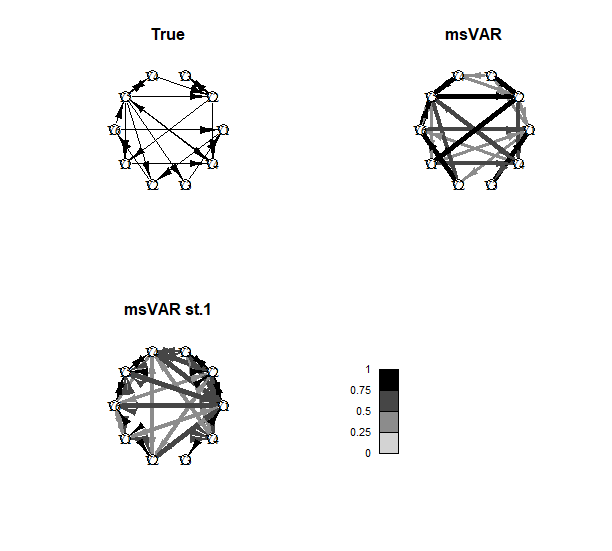} %}
\caption{Model 1 simulation result. The width and color shade of estimated edges indicate the proportion of the number of times the edge was detected out of 50 replication. }
\label{fig:compst1}
\end{figure}

\subsection{Real Data Analysis}
We compare the forecasting performance of various VAR methods on a real-world macro-economic dataset. In addition to sVAR and msVAR, we consider Bayesian Ridge Regression VAR (BRRVAR) \citep{Banbura}, VAR with Lasso (LASSOVAR) penalty \citep{Song2011}, and VAR with hierarchical componentwise (HVARC) and Own/Other (HVAROO) \citep{Nicholson2016} penalties. Corresponding tuning parameters for the LASSOVAR, HVARC, and HVAROO are selected using rolling cross-validation \citep{Nicholson2016}.

 The dataset represents the 168 monthly US macro-economic time series from $01/1959$ to $02/2009$. Initially, the dataset was compiled by \citet{Stock2005} and latter augmented by \citet{koop2011}. \citet{koop2011} defines and analyzes a small (K = 3), medium (K = 20), medium-large (K = 40) and large (K = 168) VARs. In this paper, we focus only on the medium-large setup with $K = 40$ variables. To avoid a policy break, the selected sample period runs from  $01/1990$ to $2/2009$. Following \citet{koop2011}, we transform the data-set to make variables approximately stationary. For both, sVAR and msVAR, BIC  selects the number of lags to be $p = 2$. The same number of lags $p = 2$ is used for other four sparse VAR methods.

We compare the out-of-sample forecast performance for the above six VAR methods using the last 24 months ($T_{test} = 24$) as test data. We use the $h$-step-ahead forecast root mean squared error (RMSE) as a measure for the comparison.
\[\mbox{RMSE}(h) = \left[ K^{-1}(T_{test} - h + 1)^{-1}\sum_{k = 1}^K \sum_{t = T}^{T + T_{test} - h}(\hat Y_{t+h,k} - Y_{t+h,k})^2 \right]^{1/2},\]
where $\hat Y_{t+h,k}$ is the $h$-step-ahead forecast of $Y_{t+h,k}$ for $k = 1, \dots, K$.
 Table~\ref{t:rms} summarizes RMSE(h) for a forecast horizon $h = 1, 2, 3$ and $4$.  It can be seen that the sVAR perform slightly better than the msVAR for all $h = 1,\dots,4$ and HVARC is the best among six methods.

\begin{table}[!htbp] \centering
\caption{The h-step ahead forecast root mean squared error (RMSE(h)).}
\label{t:rms}
\begin{tabular}{@{\extracolsep{5pt}} lllll}
\\[-1.8ex]\hline
\hline \\[-1.8ex]
 & h = 1 & h = 2 & h = 3 & h = 4 \\
\hline \\[-1.8ex]
msVAR & 0.179 (0.07) & 0.145 (0.008) & 0.151 (0.01) & 0.091(0.004) \\
sVAR & 0.149 (0.055) & 0.123 (0.008) & 0.131 (0.009) & 0.086 (0.007) \\
BRRVAR  & 0.156 (0.004) &  0.153(0.007) & 0.158 (0.008) & 0.149 (0.004) \\
LASSOVAR& 0.111(0.008) &  0.099(0.012) &  0.088(0.018) & 0.068 (0.008) \\
HVARC & \textbf{0.107} (0.008) &  \textbf{0.086} (0.012) & \textbf{0.074} (0.018) &  \textbf{0.053} (0.008)\\
HVAROO & 0.115 (0.008) & 0.097(0.011) & 0.083 (0.017) & 0.069(0.009)\\
\hline \\[-1.8ex]
\footnotesize {\textit{Note:}}& \multicolumn{4}{l}{\footnotesize {$^*$ Parentheses contain estimated Standard Deviations.}}
\end{tabular}
\end{table}

%\subsection{Comparing Criterion for TSGlasso}
%In this section, we compare the tuning parameter selection criterion

\section{Conclusion} \label{s:s6}
 We have proposed the msVAR method, a modification of the two-stage sVAR method in  \citet{Davis2016}, where we substitute the first stage of the sVAR with the new and powerful time series graphical lasso algorithm to identify/estimate zeros of the inverse spectral density matrix while recognizing that its entries are complex-valued. The second stage implements refinement of the non-zero entries using FDR. This paper focuses on algorithmic and numerical results. Real data analysis and simulation results show usefulness of our method. %In particular, we empirically show that the performance of msVAR exceeds sVAR in most of the simulation settings.
Theoretical properties of our method, such as the consistency and the support recovery of the msVAR, are left for future research.

\clearpage
%%%%%%%%%%%%%%%%%%%%%%%%%%%%%%%%%%%%%%%%%%%%%

\appendix
\section{Appendix}

\subsection{Link Between AR coefficient and PSC} \label{s:arpsc}

As discussed, the assertion (\ref{eq:assert}) is not exact and can be violated for some AR models. Here, relying on the framework developed in \citet{songsiri2009}, we discuss the conditions when (\ref{eq:assert}) is exact.

The spectral density of the VAR process can be expressed as
\[f^Y(\omega) = \mathbf{A}^{-1}(e^{i\omega})\Sigma \mathbf{A}^{-H}(e^{i\omega}), \]
where $\mathbf{A}^{-1}(z)$ is the transfer function from $\omega$ to $Y$ , $\mathbf{A} = I + z^{-1}A_1 + \dots + z^{-p}A_p$, $B^H$ is the Hermitian transpose of matrix $B$ and $i = \sqrt{-1}$. Therefore, the inverse spectrum of an AR process is a trigonometric matrix polynomial
\begin{equation} \label{eq:cnd}
\Theta^Y(\omega) =  \mathbf{A}^{H}(e^{i\omega})\Theta \mathbf{A}(e^{i\omega}) = X_0 + \sum_{k = 1}^p(e^{-ik\omega}X_k + e^{ik\omega}X^{T}_k),
\end{equation}
where $\Theta = \Sigma^{-1}$, $X_k = \sum_{i = 0}^{p - k}A^T_i \Theta A_{i + k}$ with $A_0 = I$. From (\ref{eq:spd}) and (\ref{eq:cnd}) we obtain,
\begin{equation}
\mbox{PSC}_{ij}(\omega) = 0 \Leftrightarrow  (X_k)_{ij} = 0\; \mbox{for}\; k = 0, \dots, p.
\end{equation}
 %Thus, heuristically, for ultra-sparse $A_i, \, i = 1, \dots, p$ and $\Theta$ this condition is more susceptible to hold than for the dense counterparts and the use of sVAR framework is justified.

%\subsection{Spectral Density Bootstrap}
%Algorithm~\ref{tb:spdboot} describes the spectral density bootstrap procedure to estimate (\ref{eq:cov}). The description heavily relies on \citet{ming2004} framework.
%
%\begin{table}[H]
% \centering
% \captionsetup{name=Algorithm}
%\caption{Spectral Density Bootstrap}
%\label{tb:spdboot}
%\begin{tabular} {p{15cm}}
%\hline
%for $b  = 1 : B$ repeat:
%\begin{enumerate}[noitemsep]
%	\item[1.] From the time series $Y_t, \, t= 1, \dots, T$, compute the (inverse) spectral density $\hat f_b[n]$ and $\hat \Theta_b[n], \, n  = 1, \dots, N$ using TSGlasso.
%	\item[2.] Perform the complex Cholesky decomposition $\hat \Theta[n] = L[n]L^H[n]$.
%	\item[3.] Generate time series from the inverse Fourier transform $t = 1, \dots, T$
%	\[Y_t = \sum_{n = 1}^N (L^{-H}[n])Z[n] \mbox{exp}(i2\pi nt),\]
%	where $Z[n]$ is a complex normal with zero mean and covariance $(1/N) I_p$ for $n \neq \{1, N\}$, and is a real normal with zero mean and covariance otherwise.
%\end{enumerate}
%\\
%\hline
%\end{tabular}
%\end{table}

\section{Derivation of updates using Wirtinger Calculus} \label{a:wc}
Before providing details on solving updates (a) and (b) for (\ref{eq:14}), we give a brief overview of Wirtinger calculus. Deeper treatment of the subject can be found in \citet{remmert1991, kreutz2009} and in a pithy presentation by \citet{brandwood1983}.
\subsection{Wirtinger Calculus}
Let $z = x + iy$, where $x,y$ are real and $i = \sqrt{-1}$. Consider a general complex-valued function $f(z) = u(x,y) + i v(x,y)$, where we assume that the partial derivatives of $u$ and $v$ exist. Then the standard complex derivative $f^'(z)$ exist if $f(z)$ is holomorphic or Cauchy-Riemann equations are satisfied, i.e.,
\[ \frac{\partial u}{\partial v} = \frac{\partial v}{\partial y}, \;\; \frac{\partial v}{\partial x} = - \frac{\partial u}{\partial y} \]

Unfortunately, those conditions are strong, and the functions that we are usually interested in violate them. For example, $f(z) = |z|^2 = z^*z$, where $z^*$ is the conjugate of $z$. However, since real partial derivatives of a non-holomorphic function exist, one can exploit the real $R^2$ vector space structure, which underlies $C$, and represent $f(z) = f(x,y): R^2 \rightarrow R$. \citet{remmert1991} called the differentiation of this function $R$-derivative to avoid confusion with the standard complex derivative. As discussed in \citet[Section 3.1]{kreutz2009}, this representation can not be viewed as an admissible generalization of the standard complex derivative, since it, as well, suffers from some drawbacks. For example, it does not reduce to the standard complex derivative when a function $f(z)$ is holomorphic.

The generalization, in a sense discussed above, were developed in the notion of Wirtinger calculus \citep{wirtinger1927, brandwood1983}. In particular, a complex function $f(z)$ is viewed as a function of $z$ and its conjugate $z^*$
\[f(z) = f(z, z^*) = u(x,y) + i v(x,y).\]
It can be shown that $f(z,z^*)$ is holomorphic in $z$ for fixed $z^*$, and, similarly, holomorphic in $z^*$ for fixed $z$. Then the Wirtinger derivative and its conjugate are defined as
\[\frac{\partial f(z,z^*)}{\partial z}, \;\; \frac{\partial f(z,z^*)}{\partial z^*}\]
For example, for the function $f(z) = |z|^2 = z^*z$, $\frac{\partial f(z,z^*)}{\partial z} = z^*$ and $\frac{\partial f(z,z^*)}{\partial z*} = z$.

\subsection{Solving Updates (a) and (b)} \label{s:solveAB}
The derived formulas for updates (a) and (b) are given in (\ref{eq:updA}) and (\ref{eq:updB}), respectively.
It is instructive to note that, $L_\rho(\Theta[\cdot],Z[\cdot],U[\cdot])$ is separable in $n$ and the update of (a) can be implemented in parallel by minimizing
$J(\Theta[n]),\, n = 1, \dots, N$, where
\begin{equation} \label{eq:jtheta}
J(\Theta[n]) =  - \log \mbox{det}(\Theta[n]) + \mbox{tr} (\tilde f[n]\Theta[n]) + \rho/2 \|\Theta[n] - Z^{(k)}[n] + U^{(k)}[n]\|^2_F
\end{equation}
To use Wirtinger calculus, we write $J(\Theta[n]) = J(\Theta[n], \Theta^*[n])$ as a function of $\Theta$ and its complex conjugate $\Theta^*$. Thus, (\ref{eq:jtheta}) can be written as
\begin{equation}
\begin{aligned}
J(\Theta[n], \Theta^*[n]) &=  - \frac{1}{2}\Big[\log \mbox{det}(\Theta[n]) + \log \mbox{det} (\Theta^*[n]) + \mbox{tr} (\tilde f[n]\Theta[n]) + \mbox{tr} (\tilde f^*[n]\Theta^*[n]) \\
& + \rho \mbox{tr}(\Theta[n] - Z^{(k)}[n] + U^{(k)}[n]) (\Theta[n] - Z^{(k)}[n] + U^{(k)}[n])^H )\Big]
\end{aligned}
\end{equation}
Then in update (a), a necessary and sufficient condition for a global optimum  is that the gradient $J(\Theta[n], \Theta^*[n])$ with respect to $\Theta^*[n])$ is zero \citep{brandwood1983}:
\begin{equation} \label{eq:agrad}
	\Theta^{-1}[n] + \tilde f[n] + \rho(\Theta[n] - Z^{(k)}[n] + U^{(k)}[n]) = 0
\end{equation}

The solution to (\ref{eq:agrad}) follows as in \citet[Section 6.5]{Boyd2011}. Let the eigen-decomposition of the matrix $\rho(Z^{(k)}[n] - U^{(k)}[n]) - \tilde f[n]$ be $V_n C_n V^H_n $. Then
\begin{equation} \label{eq:eigen}
\Theta^{(k + 1)} = V_n \tilde C_n V^H_n ,
\end{equation}
where $\tilde C_n$ is the diagonal matrix with the $j$th diagonal element
\begin{equation} \label{eq:updA}
(\tilde C_n)_{jj} = (1/2\rho) (-(C_n)_{jj} + \sqrt{(C_n)_{jj} +4\rho}).
\end{equation}
 Finally, the update (a) is completed by obtaining the preceding solution for $n = 1,2, \dots, N$.

For an update (b), we consider the following two lemmas. The first lemma derives the Wirtinger subgradient for the penalty term in (\ref{tsggl:1}), and the second lemma provides an update for (b).

\begin{lemma} \label{l:subg}
Given $x \in C^N$, the Wirtinger subgradient of the function $T:C^N \rightarrow R$, where $T(x) = \|x\|_2$ is
\begin{equation}
\partial^W T(x) = \begin{cases}
	\frac{x}{2 \|x\|_2}, & \mbox{if }\, x \neq 0 \\
	\in \{u |\, \|u\|_2 \leq 1/ 2,\, u \in C^p\}, & \mbox{if }\,  x = 0
	
\end{cases}
\end{equation}
\end{lemma}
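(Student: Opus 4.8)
The plan is to exploit the identification $\mathbb{C}^N\cong\mathbb{R}^{2N}$ via $x = x_R + i\,x_I$ and to reduce the claim to the well-known subdifferential of the Euclidean norm on $\mathbb{R}^{2N}$, transported back to the complex setting through the Wirtinger conjugate derivative. Recall from \citet{bouboulis2012} that for a convex $T:\mathbb{C}^N\to\mathbb{R}$ a vector $u\in\mathbb{C}^N$ is a Wirtinger subgradient at $x$, written $u\in\partial^W T(x)$, exactly when the subgradient inequality $T(y)\ge T(x)+2\,\mathrm{Re}\big(u^H(y-x)\big)$ holds for every $y\in\mathbb{C}^N$; the factor $\tfrac12$ here encodes the relation between the conjugate derivative $\partial/\partial x^*$ and the real gradient in the coordinates $(x_R,x_I)$, since for a real-valued $f$ one has $\mathrm{Re}(\tilde g^H v)=\langle g,v\rangle_{\mathbb{R}^{2N}}$ with $\tilde g = 2\,\partial f/\partial x^*$. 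Because $T(x)=\|x\|_2$ is convex, this characterization applies at every point, and I would treat the two regimes $x\neq 0$ and $x=0$ separately.

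For $x\neq 0$ the function $T$ is differentiable, so its subdifferential is the single Wirtinger gradient. Writing $T(x)=\sqrt{x^Hx}$ and differentiating with respect to the conjugate variable gives $\partial T/\partial x^* = x/(2\|x\|_2)$, which is exactly the claimed value; convexity guarantees that this single derivative satisfies the subgradient inequality, so $\partial^W T(x)=\{x/(2\|x\|_2)\}$.

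The substantive case is $x=0$, where $T$ is non-smooth. Here $u\in\partial^W T(0)$ if and only if $\|y\|_2\ge 2\,\mathrm{Re}(u^Hy)$ for all $y\in\mathbb{C}^N$. For sufficiency, Cauchy–Schwarz gives $2\,\mathrm{Re}(u^Hy)\le 2|u^Hy|\le 2\|u\|_2\|y\|_2$, and the right-hand side is dominated by $\|y\|_2$ precisely when $\|u\|_2\le\tfrac12$. For necessity, I would test the inequality at $y=u$: there $2\,\mathrm{Re}(u^Hu)=2\|u\|_2^2$ must not exceed $\|u\|_2$, which forces $\|u\|_2\le\tfrac12$. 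Together these two directions yield $\partial^W T(0)=\{u:\|u\|_2\le 1/2\}$, matching the statement.

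I expect the main obstacle to be bookkeeping the factor of $\tfrac12$ consistently, i.e.\ ensuring that the conjugate-derivative convention of \citet{bouboulis2012} produces the radius $1/2$ rather than $1$, and stating the complex subgradient inequality with the correct real-part/Hermitian pairing so that it genuinely reproduces the real subdifferential of the norm under the $\mathbb{R}^{2N}$ identification.
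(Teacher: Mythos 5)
Your proposal is correct and follows essentially the same route as the paper: the Wirtinger conjugate derivative of $T(x,x^*)=(x^Hx)^{1/2}$ handles the case $x \neq 0$, and the subgradient inequality of \citet[Definition 3]{bouboulis2012} specialized at $x = 0$ reduces the problem to showing that $\|y\|_2 \geq 2\,\mathrm{Re}(u^H y)$ for all $y$ is equivalent to $\|u\|_2 \leq 1/2$. The only difference is that where the paper settles this last equivalence by citing the self-duality of the $\ell_2$ norm \citep{Horn2012}, you prove it directly (Cauchy--Schwarz for sufficiency, testing $y = u$ for necessity), which is a harmless, slightly more self-contained rendering of the same step.
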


\begin{proof} We start from the $x \neq 0$ case. The derivative of $T(x, x^*) =  \|x\|_2 = (x^H x)^{1/2} $ with respect to the conjugate $x^*$ is
\[\frac{\partial T(x)}{x^*} = \frac{x}{2\|x\|_2}\]

For the case $x = 0$, to find the subgradient $\partial^W T(x^*)$ of the function $T(x)$, we exploit \citet[Definition 3]{bouboulis2012}. From which, $\partial^W T(x)$ is a Wirtinger subgradient if it satisfies

\begin{equation} \label{e:defsub}
T(y) \geq T(x) + 2\mbox{Re}((y - x)^H (\partial^W T(x))^*), \; y \in C^N
\end{equation}
where $\mbox{Re}(\cdot)$ indicates the real part of the complex variable. From (\ref{e:defsub}), we have for $x = 0$ and any $ y \in C^N$

\begin{equation} \label{eq:subg}
 \|y\|_2 \geq 2\mbox{Re}((y)^H (\partial^W T(x))^* )
\end{equation}
But (\ref{eq:subg}) is just the definition of the dual function of $\|x\|_2$, which is also $\ell_2$ norm \citep{Horn2012}, and the result follows.
\end{proof}

For the next lemma, we define the generic function $h: C^N \rightarrow R$
\begin{equation}
	h(x) = \frac{1}{2}\|a - x\|^2_2 + \lambda \|x\|_2
\end{equation}
and denote by $\hat x = \argmin_{x}h(x)$.

\begin{lemma} \label{l:comp}
	The $i$th component of the global minimum of $h(x)$ has the following closed form solution
	\begin{equation}
		\hat x_i = S_{\lambda/ \rho} (a_i),
	\end{equation}
	where $S_{\mu}(a_i) = (1 - \mu / \|a\|_2)a_i$
\end{lemma}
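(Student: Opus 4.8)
The plan is to characterize $\hat x$ through a first-order optimality condition. Since $\frac{1}{2}\|a-x\|_2^2$ is strictly convex and $\lambda\|x\|_2$ is convex, $h$ is strictly convex and has a unique global minimizer, which is the unique point at which $0$ lies in the Wirtinger subdifferential. First I would split $h$ into the smooth part $q(x)=\frac{1}{2}\|a-x\|_2^2$ and the nonsmooth part $\lambda\|x\|_2$. Writing $q(x,x^*)=\frac{1}{2}(a-x)^H(a-x)$ and differentiating in the Wirtinger sense with respect to $x^*$ gives $\partial^W q=-\frac{1}{2}(a-x)$, while Lemma~\ref{l:subg} supplies the subgradient of $\|x\|_2$. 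Hence stationarity reads
\[ -\tfrac{1}{2}(a-\hat x) + \lambda\, g = 0, \qquad g \in \partial^W\|\hat x\|_2 . \]

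Next I would treat the two branches of Lemma~\ref{l:subg}. In the case $\hat x\neq 0$, substituting $g=\hat x/(2\|\hat x\|_2)$ and clearing the common factor $\tfrac12$ yields $\hat x\,(1+\lambda/\|\hat x\|_2)=a$. Because the bracketed quantity is a strictly positive real scalar, this forces $\hat x$ to be a positive real multiple of $a$, say $\hat x=c\,a$ with $c>0$; inserting this and matching norms gives $c+\lambda/\|a\|_2=1$, so $c=1-\lambda/\|a\|_2$, which is positive exactly when $\|a\|_2>\lambda$. Reading off coordinates then gives $\hat x_i=(1-\lambda/\|a\|_2)a_i$, i.e. the claimed soft-thresholded form.

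For the boundary case $\hat x=0$, I would use the second branch of Lemma~\ref{l:subg}: optimality holds iff there is a $u$ with $\|u\|_2\le 1/2$ and $-\tfrac12 a+\lambda u=0$, i.e. $u=a/(2\lambda)$. The constraint $\|u\|_2\le 1/2$ then becomes $\|a\|_2\le\lambda$, which is precisely the complement of the regime above, so the two cases paste into the single block-thresholding rule $\hat x=(1-\lambda/\|a\|_2)_+\,a$ applied coordinatewise. Finally, since in the $Z$-update the objective \eqref{eq:14} with penalty \eqref{tsggl:1} is scaled by $\rho$ before one applies this lemma (compare the $\rho/2$ factor in \eqref{eq:jtheta}), the effective weight is $\lambda/\rho$, giving the stated $\hat x_i=S_{\lambda/\rho}(a_i)$.

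The main obstacle I anticipate is bookkeeping rather than conceptual: one must carefully carry the factors of $\tfrac12$ that the Wirtinger conventions attach both to $\partial^W q$ and to the subgradient of $\|x\|_2$ in Lemma~\ref{l:subg}, and check that they cancel so the threshold is $\lambda$ (equivalently $\lambda/\rho$ after the rescaling) and not $2\lambda$ or $\lambda/2$. The only other step needing a word of justification is the collinearity deduction $\hat x=c\,a$, but this is immediate from $(1+\lambda/\|\hat x\|_2)\hat x=a$ because the prefactor is a strictly positive real number.
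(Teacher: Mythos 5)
Your proof is correct and follows essentially the same route as the paper: set the Wirtinger subdifferential of $h$ to zero, i.e. $0 \in \frac{1}{2}(x-a) + \lambda\,\partial^{W}\|x\|_2$ with the subgradient supplied by Lemma~\ref{l:subg}, and then derive the group soft-thresholding solution. The only difference is one of completeness: the paper defers the two-branch case analysis to Friedman et al.\ (2010, Section 2), whereas you carry it out explicitly (collinearity, norm matching, and the boundary case $\hat x = 0$) and also make explicit the $\rho$-rescaling that reconciles the penalty weight $\lambda$ in $h$ with the stated threshold $\lambda/\rho$, so your version is self-contained.
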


\begin{proof}
The proof of the lemma  relies on the framework developed in \citet{Friedman2010}, Lemma~\ref{l:subg} and Wirtinger calculus. Since $h(x)$ is convex on $x$, a necessary and sufficient condition for a global minimum $x^*$ is that the Wirtinger subdifferntial $\partial^W h(x) \in 0$. Thus, we solve
\[ 0 \in \frac{1}{2}(x - a) + \gamma,\]
where $\gamma$ is a subgradient from Lemma~\ref{l:subg}. Then, the result can be derived following steps as in \citet[Section 2]{Friedman2010}.

\end{proof}
 Invoking the Lemma~\ref{l:comp}, the update (b) for $i \neq j$ is

\begin{equation} \label{eq:updB}
	Z^{(k + 1)}_{ij}[n] = S_{\lambda/ \rho}(\Theta^{(k+1)}_{ij}[n] + U_{ij}^{(k)}[n])
\end{equation}
and $Z^{(k + 1)}_{ij}[n] = \Theta^{(k+1)}_{ii}[n]$ for $i = j$, since we do not penalize diagonal elements.

 \clearpage
\bibliography{ml_bib}

\begin{thebibliography}{40}
\newcommand{\enquote}[1]{``#1''}
\providecommand{\natexlab}[1]{#1}
\providecommand{\url}[1]{\texttt{#1}}
\providecommand{\urlprefix}{URL }
\providecommand{\bibAnnoteFile}[1]{%
  \IfFileExists{#1}{\begin{quotation}\noindent\textsc{Key:} #1\\
  \textsc{Annotation:}\ \input{#1}\end{quotation}}{}}
\providecommand{\bibAnnote}[2]{%
  \begin{quotation}\noindent\textsc{Key:} #1\\
  \textsc{Annotation:}\ #2\end{quotation}}

\bibitem[{Anderson(2003)}]{Anderson2003}
Anderson, T. (2003), \emph{An Introduction to Multivariate Statistics}. Wiley,
  New York.
\bibAnnoteFile{Anderson2003}

\bibitem[{Baglama and Reichel(2005)}]{Baglama2005}
Baglama, James and Lothar Reichel (2005), \enquote{Augmented implicitly
  restarted lanczos bidiagonalization methods.} \emph{SIAM J. Sci. Comput.},
  27, 19--42.
\bibAnnoteFile{Baglama2005}

\bibitem[{Banbura et~al.(2010)Banbura, Giannone, and Reichlin}]{Banbura}
Banbura, M., D.~Giannone, and L.~Reichlin (2010), \enquote{Large bayesian
  vars.} \emph{Journal of Applied Econometrics}, 25, 71--92.
\bibAnnoteFile{Banbura}

\bibitem[{Barber and Cand{\`e}s(2015)}]{Barber2015}
Barber, RF and E.~Cand{\`e}s (2015), \enquote{Controlling the false discovery
  rate via knockoffs.} \emph{The Annals of Statistics}, 43, 2055--2085.
\bibAnnoteFile{Barber2015}

\bibitem[{Basu and Michailidis(2015)}]{basu2015}
Basu, Sumanta and George Michailidis (2015), \enquote{Regularized estimation in
  sparse high-dimensional time series models.} \emph{Ann. Statist.}, 43,
  1535--1567.
\bibAnnoteFile{basu2015}

\bibitem[{Benjamini and Gavrilov(2009)}]{Benjamini2009}
Benjamini, Y and Y.~Gavrilov (2009), \enquote{A simple forward selection
  procedure based on false discovery rate control.} \emph{The Annals of Applied
  Statistics}, 3, 179--198.
\bibAnnoteFile{Benjamini2009}

\bibitem[{Benjamini and Hochberg(1995)}]{Benjamini1995}
Benjamini, Y and Y.~Hochberg (1995), \enquote{Controlling the false discovery
  rate: a practical and powerful approach to multiple testing.} \emph{Journal
  of the royal statistical society. Series B}, 289--300.
\bibAnnoteFile{Benjamini1995}

\bibitem[{Bickel and Levina(2008)}]{Bickel2008}
Bickel, P and E.~Levina (2008), \enquote{Regularized estimation of large
  covariance matrices.} \emph{The Annals of Statistics}, 36, 199--227.
\bibAnnoteFile{Bickel2008}

\bibitem[{Boshnakov(2009)}]{Boshnakov2009}
Boshnakov, Bisher~M., Georgi N.and~Iqelan (2009), \enquote{Generation of time
  series models with given spectral properties.} \emph{Journal of Time Series
  Analysis}, 30, 349--368.
\bibAnnoteFile{Boshnakov2009}

\bibitem[{Boyd et~al.(2011)Boyd, Parikh, Chu, Peleato, and Eckstein}]{Boyd2011}
Boyd, Stephen, Neal Parikh, Eric Chu, Borja Peleato, and Jonathan Eckstein
  (2011), \enquote{Distributed optimization and statistical learning via the
  alternating direction method of multipliers.} \emph{Found. Trends Mach.
  Learn.}, 3, 1--122.
\bibAnnoteFile{Boyd2011}

\bibitem[{Brillinger(1981)}]{Brillinger2001}
Brillinger, David~R. (1981), \emph{Time Series: Data Analysis and Theory}.
  Society for Industrial and Applied Mathematics, Philadelphia, PA, USA.
\bibAnnoteFile{Brillinger2001}

\bibitem[{Brockwell and Davis(1986)}]{Brockwell1986}
Brockwell, Peter~J and Richard~A Davis (1986), \emph{Time Series: Theory and
  Methods}. Springer-Verlag New York, Inc., New York, NY, USA.
\bibAnnoteFile{Brockwell1986}

\bibitem[{Dahlhaus(2000)}]{Dahlhaus2000}
Dahlhaus, Rainer (2000), \enquote{Graphical interaction models for multivariate
  time series.} \emph{Metrika}, 51, 157--172.
\bibAnnoteFile{Dahlhaus2000}

\bibitem[{Danaher et~al.(2014)Danaher, Wang, and Witten}]{Danaher2014}
Danaher, Patrick, Pei Wang, and Daniela~M Witten (2014), \enquote{The joint
  graphical lasso for inverse covariance estimation across multiple classes.}
  \emph{Journal of the Royal Statistical Society: Series B (Statistical
  Methodology)}, 76, 373--397.
\bibAnnoteFile{Danaher2014}

\bibitem[{Davis et~al.(2016)Davis, Zang, and Zheng}]{Davis2016}
Davis, Richard~A., Pengfei Zang, and Tian Zheng (2016), \enquote{Sparse vector
  autoregressive modeling.} \emph{Journal of Computational and Graphical
  Statistics}, 25, 1077--1096.
\bibAnnoteFile{Davis2016}

\bibitem[{Doan et~al.(1984)Doan, Litterman, and Sims}]{DoanLitterman}
Doan, T., R.~Litterman, and C.~A. Sims (1984), \enquote{Forecasting and
  conditional projection using realistic prior distributions.}
  \emph{Econometric Reviews}, 3, 1--100.
\bibAnnoteFile{DoanLitterman}

\bibitem[{Efron(2010)}]{Efron2010}
Efron, B. (2010), \emph{Large-Scale Inference: Empirical Bayes Methods for
  Estimation, Testing, and Prediction}. Cambridge University Press, Cambridge.
\bibAnnoteFile{Efron2010}

\bibitem[{Forni et~al.(2000)Forni, Hallin, Lippi, and Reichlin}]{Forni}
Forni, M., M.~Hallin, M.~Lippi, and L.~Reichlin (2000), \enquote{The
  generalized dynamic factor model: identification and estimation.}
  \emph{Review of Economics and Statistics}, 82, 540–554.
\bibAnnoteFile{Forni}

\bibitem[{Friedman et~al.(2008)Friedman, Hastie, and Tibshirani}]{Friedman2008}
Friedman, J, T~Hastie, and R.~Tibshirani (2008), \enquote{Sparse inverse
  covariance estimation with the graphical lasso.} \emph{Biostatistics}, 9,
  432--441.
\bibAnnoteFile{Friedman2008}

\bibitem[{Gilbert(2005)}]{Gilbert2005}
Gilbert, P (2005), \emph{Brief users guide: Dynamic systems estimation (dse)}.
  \urlprefix\url{http://www.R-project.org/}.
\bibAnnoteFile{Gilbert2005}

\bibitem[{G'Sell et~al.(2016)G'Sell, Wager, Chouldechova, and
  Tibshirani}]{GSell2016}
G'Sell, MG, S~Wager, A~Chouldechova, and R.~Tibshirani (2016),
  \enquote{Sequential selection procedures and false discovery rate control.}
  \emph{Journal of the royal statistical society: series B}, 78, 423--444.
\bibAnnoteFile{GSell2016}

\bibitem[{Guo et~al.(2011)Guo, Levina, Michailidis, and Zhu}]{Guo2011}
Guo, Jian, Elizaveta Levina, George Michailidis, and Ji~Zhu (2011),
  \enquote{Joint estimation of multiple graphical models.} \emph{Biometrika},
  98, 1--15.
\bibAnnoteFile{Guo2011}

\bibitem[{Haufe et~al.(2010)Haufe, M{\"u}ller, Nolte, and
  Kr{\"a}mer}]{haufe2010}
Haufe, Stefan, Klaus-Robert M{\"u}ller, Guido Nolte, and Nicole Kr{\"a}mer
  (2010), \enquote{Sparse causal discovery in multivariate time series.} In
  \emph{NIPS Causality: Objectives and Assessment}.
\bibAnnoteFile{haufe2010}

\bibitem[{Hsiao(1979)}]{Hsiao1979}
Hsiao, Cheng (1979), \enquote{Autoregressive modeling of canadian money and
  income data.} \emph{Journal of the American Statistical Association}, 74,
  553--560.
\bibAnnoteFile{Hsiao1979}

\bibitem[{Huang et~al.(2006)Huang, Liu, Pourahmadi, and Liu}]{Huang2006}
Huang, J, N~Liu, M~Pourahmadi, and L.~Liu (2006), \enquote{Covariance matrix
  selection and estimation via penalised normal likelihood.} \emph{Biometrika},
  93, 85--98.
\bibAnnoteFile{Huang2006}

\bibitem[{Jung et~al.(2015)Jung, Hannak, and Goertz}]{jung2015}
Jung, Alexander, Gabor Hannak, and Norbert Goertz (2015), \enquote{Graphical
  lasso based model selection for time series.} \emph{IEEE Signal Processing
  Letters}, 22, 1781--1785.
\bibAnnoteFile{jung2015}

\bibitem[{Ledoit and Wolf(2004)}]{Ledoit2004}
Ledoit, O. and M.~Wolf (2004), \enquote{A well-conditioned estimator for
  large-dimensional covariance matrices.} \emph{Journal of multivariate
  analysis}, 88, 365--411.
\bibAnnoteFile{Ledoit2004}

\bibitem[{Litterman(1986)}]{Litterman}
Litterman, R. (1986), \enquote{Forecasting with bayesian vector{
  Autoregressions – Five Years of Experience}.} \emph{Journal of Business and
  Economic Statistics}, 4, 25–38.
\bibAnnoteFile{Litterman}

\bibitem[{L{\"u}tkepohl(2007)}]{Lutkepohl2007}
L{\"u}tkepohl, Helmut (2007), \emph{New Introduction to Multiple Time Series
  Analysis}. Springer, New York.
\bibAnnoteFile{Lutkepohl2007}

\bibitem[{Nicholson et~al.(2016)Nicholson, Bien, and Matteson}]{Nicholson2016}
Nicholson, William~B., Jacob Bien, and David~S. Matteson (2016),
  \enquote{Hierarchical vector autoregression.} \emph{Arxiv preprint
  arXiv:1412.5250v2}.
\bibAnnoteFile{Nicholson2016}

\bibitem[{Shojaie and Michailidis(2010)}]{Shojaie}
Shojaie, A. and G.~Michailidis (2010), \enquote{Discovering graphical granger
  causality using the truncating lasso penalty.} \emph{Bioinformatics}, 26,
  517--523.
\bibAnnoteFile{Shojaie}

\bibitem[{Song and Bickel(2011)}]{Song2011}
Song, S. and P.~J. Bickel (2011), \enquote{Large vector auto regressions.}
  \emph{Arxiv preprint arXiv:1106.3915}.
\bibAnnoteFile{Song2011}

\bibitem[{Stein(1956)}]{Stein1956}
Stein, C. (1956), \enquote{Inadmissibility of the usual estimator for the mean
  of a multivariate normal distribution.} \emph{Proceedings of the Third
  Berkeley Symposium on Mathematical and Statistical Probability}, 1, 197--206.
\bibAnnoteFile{Stein1956}

\bibitem[{Stock and Watson(2002)}]{StockWatson}
Stock, J.~H. and M.~W. Watson (2002), \enquote{Forecasting using principal
  components from a large number of predictors.} \emph{Journal of the American
  Statistical Association}, 97, 147–162.
\bibAnnoteFile{StockWatson}

\bibitem[{Tibshirani(1996)}]{Tibshirani}
Tibshirani, R. (1996), \enquote{Regression shrinkage and selection via the
  lasoo.} \emph{Journal of the Royal Statistical Society}, 58, 267--288.
\bibAnnoteFile{Tibshirani}

\bibitem[{Wu and Pourahmadi(2003)}]{Wu2003}
Wu, WB and M.~Pourahmadi (2003), \enquote{Nonparametric estimation of large
  covariance matrices of longitudinal data.} \emph{Biometrika}, 90, 831--844.
\bibAnnoteFile{Wu2003}

\bibitem[{Yuan and Lin(2006)}]{Yuan}
Yuan, M. and Y.~Lin (2006), \enquote{Model selection and estimation in
  regression with grouped variables.} \emph{Journal of the Royal Statistical
  Society, Series B}, 68, 49--67.
\bibAnnoteFile{Yuan}

\bibitem[{Yuan and Lin(2007)}]{Yuan2007}
Yuan, M and Y.~Lin (2007), \enquote{Model selection and estimation in the
  gaussian graphical model.} \emph{Biometrika}, 94, 19--35.
\bibAnnoteFile{Yuan2007}

\bibitem[{Zhang(2011)}]{Zhang2011}
Zhang, Fuzhen (2011), \emph{Matrix Theory:Basic Results and Techniques}. New
  York:Springer.
\bibAnnoteFile{Zhang2011}

\bibitem[{Zhao and Yu(2006)}]{Zhao}
Zhao, P. and B~Yu (2006), \enquote{On model selection consistency of lasso.}
  \emph{Journal of Machine Learning Research}, 7, 2541--2563.
\bibAnnoteFile{Zhao}

\end{thebibliography}


\begin{thebibliography}{47}
\newcommand{\enquote}[1]{``#1''}
\providecommand{\natexlab}[1]{#1}
\providecommand{\url}[1]{\texttt{#1}}
\providecommand{\urlprefix}{URL }
\providecommand{\bibAnnoteFile}[1]{%
  \IfFileExists{#1}{\begin{quotation}\noindent\textsc{Key:} #1\\
  \textsc{Annotation:}\ \input{#1}\end{quotation}}{}}
\providecommand{\bibAnnote}[2]{%
  \begin{quotation}\noindent\textsc{Key:} #1\\
  \textsc{Annotation:}\ #2\end{quotation}}

\bibitem[{Banbura et~al.(2010)Banbura, Giannone, and Reichlin}]{Banbura}
Banbura, M., D.~Giannone, and L.~Reichlin (2010), \enquote{Large bayesian
  vars.} \emph{Journal of Applied Econometrics}, 25, 71--92.
\bibAnnoteFile{Banbura}

\bibitem[{Banerjee et~al.(2008)Banerjee, El~Ghaoui, and
  d'Aspremont}]{Banerjee2007}
Banerjee, Onureena, Laurent El~Ghaoui, and Alexandre d'Aspremont (2008),
  \enquote{Model selection through sparse maximum likelihood estimation for
  multivariate gaussian or binary data.} \emph{J. Mach. Learn. Res.}, 9,
  485--516.
\bibAnnoteFile{Banerjee2007}

\bibitem[{Barber and Cand{\`e}s(2015)}]{Barber2015}
Barber, RF and E.~Cand{\`e}s (2015), \enquote{Controlling the false discovery
  rate via knockoffs.} \emph{The Annals of Statistics}, 43, 2055--2085.
\bibAnnoteFile{Barber2015}

\bibitem[{Barigozzi and Brownlees(2019)}]{matteo2019}
Barigozzi, Matteo and Christian Brownlees (2019), \enquote{Nets: Network
  estimation for time series.} \emph{Journal of Applied Econometrics}, 34,
  347--364.
\bibAnnoteFile{matteo2019}

\bibitem[{Basu and Michailidis(2015)}]{basu2015}
Basu, Sumanta and George Michailidis (2015), \enquote{Regularized estimation in
  sparse high-dimensional time series models.} \emph{Ann. Statist.}, 43,
  1535--1567.
\bibAnnoteFile{basu2015}

\bibitem[{Benjamini and Gavrilov(2009)}]{Benjamini2009}
Benjamini, Y and Y.~Gavrilov (2009), \enquote{A simple forward selection
  procedure based on false discovery rate control.} \emph{The Annals of Applied
  Statistics}, 3, 179--198.
\bibAnnoteFile{Benjamini2009}

\bibitem[{Benjamini and Hochberg(1995)}]{Benjamini1995}
Benjamini, Y and Y.~Hochberg (1995), \enquote{Controlling the false discovery
  rate: a practical and powerful approach to multiple testing.} \emph{Journal
  of the royal statistical society. Series B}, 289--300.
\bibAnnoteFile{Benjamini1995}

\bibitem[{{Bouboulis} et~al.(2012){Bouboulis}, {Slavakis}, and
  {Theodoridis}}]{bouboulis2012}
{Bouboulis}, P., K.~{Slavakis}, and S.~{Theodoridis} (2012), \enquote{Adaptive
  learning in complex reproducing kernel hilbert spaces employing wirtinger's
  subgradients.} \emph{IEEE Transactions on Neural Networks and Learning
  Systems}, 23, 425--438.
\bibAnnoteFile{bouboulis2012}

\bibitem[{Boyd et~al.(2011)Boyd, Parikh, Chu, Peleato, and Eckstein}]{Boyd2011}
Boyd, Stephen, Neal Parikh, Eric Chu, Borja Peleato, and Jonathan Eckstein
  (2011), \enquote{Distributed optimization and statistical learning via the
  alternating direction method of multipliers.} \emph{Found. Trends Mach.
  Learn.}, 3, 1--122.
\bibAnnoteFile{Boyd2011}

\bibitem[{{Brandwood}(1983)}]{brandwood1983}
{Brandwood}, D.~H. (1983), \enquote{A complex gradient operator and its
  application in adaptive array theory.} \emph{IEE Proceedings F -
  Communications, Radar and Signal Processing}, 130, 11--16.
\bibAnnoteFile{brandwood1983}

\bibitem[{Brillinger(1981)}]{Brillinger2001}
Brillinger, David~R. (1981), \emph{Time Series: Data Analysis and Theory}.
  Society for Industrial and Applied Mathematics, Philadelphia, PA, USA.
\bibAnnoteFile{Brillinger2001}

\bibitem[{Brillinger(1996)}]{brillinger1996}
Brillinger, David~R. (1996), \enquote{Remarks concerning graphical models for
  time series and point processes.} \emph{Brazilian Review of Econometrics},
  16.
\bibAnnoteFile{brillinger1996}

\bibitem[{Brockwell and Davis(1986)}]{Brockwell1986}
Brockwell, Peter~J and Richard~A Davis (1986), \emph{Time Series: Theory and
  Methods}. Springer-Verlag New York, Inc., New York, NY, USA.
\bibAnnoteFile{Brockwell1986}

\bibitem[{Dahlhaus(2000)}]{Dahlhaus2000}
Dahlhaus, Rainer (2000), \enquote{Graphical interaction models for multivariate
  time series.} \emph{Metrika}, 51, 157--172.
\bibAnnoteFile{Dahlhaus2000}

\bibitem[{Danaher et~al.(2014)Danaher, Wang, and Witten}]{Danaher2014}
Danaher, Patrick, Pei Wang, and Daniela~M Witten (2014), \enquote{The joint
  graphical lasso for inverse covariance estimation across multiple classes.}
  \emph{Journal of the Royal Statistical Society: Series B (Statistical
  Methodology)}, 76, 373--397.
\bibAnnoteFile{Danaher2014}

\bibitem[{Davis et~al.(2016)Davis, Zang, and Zheng}]{Davis2016}
Davis, Richard~A., Pengfei Zang, and Tian Zheng (2016), \enquote{Sparse vector
  autoregressive modeling.} \emph{Journal of Computational and Graphical
  Statistics}, 25, 1077--1096.
\bibAnnoteFile{Davis2016}

\bibitem[{Ding et~al.(2017)Ding, Qiu, and Chen}]{ding2017}
Ding, Xin, Ziyi Qiu, and Xiaohui Chen (2017), \enquote{Sparse transition matrix
  estimation for high-dimensional and locally stationary vector autoregressive
  models.} \emph{Electron. J. Statist.}, 11, 3871--3902.
\bibAnnoteFile{ding2017}

\bibitem[{Eichler(2012)}]{eichler2012}
Eichler, M. (2012), \enquote{Graphical modelling of multivariate time series.}
  \emph{Probab. Theory Related Fields}, 153, 233--268.
\bibAnnoteFile{eichler2012}

\bibitem[{Foti et~al.(2016)Foti, Nadkarni, Lee, and Fox}]{foti2016}
Foti, N., Rahul Nadkarni, A.~Lee, and E.~Fox (2016), \enquote{Sparse plus
  low-rank graphical models of time series for functional connectivity in meg.}
  In \emph{2nd SIGKDD Workshop on Mining and Learning from Time Series}.
\bibAnnoteFile{foti2016}

\bibitem[{Foygel and Drton(2010)}]{foygel2010}
Foygel, Rina and Mathias Drton (2010), \enquote{Extended bayesian information
  criteria for gaussian graphical models.} In \emph{Proceedings of the 23rd
  International Conference on Neural Information Processing Systems - Volume
  1}, NIPS'10, 604–612, Curran Associates Inc.
\bibAnnoteFile{foygel2010}

\bibitem[{Friedman et~al.(2008)Friedman, Hastie, and Tibshirani}]{Friedman2008}
Friedman, J, T~Hastie, and R.~Tibshirani (2008), \enquote{Sparse inverse
  covariance estimation with the graphical lasso.} \emph{Biostatistics}, 9,
  432--441.
\bibAnnoteFile{Friedman2008}

\bibitem[{Friedman et~al.(2010)Friedman, Hastie, and Tibshirani}]{Friedman2010}
Friedman, Jerome~H., Trevor~J. Hastie, and Robert Tibshirani (2010), \enquote{A
  note on the group lasso and a sparse group lasso.}
\bibAnnoteFile{Friedman2010}

\bibitem[{G'Sell et~al.(2016)G'Sell, Wager, Chouldechova, and
  Tibshirani}]{GSell2016}
G'Sell, MG, S~Wager, A~Chouldechova, and R.~Tibshirani (2016),
  \enquote{Sequential selection procedures and false discovery rate control.}
  \emph{Journal of the royal statistical society: series B}, 78, 423--444.
\bibAnnoteFile{GSell2016}

\bibitem[{Guo et~al.(2011)Guo, Levina, Michailidis, and Zhu}]{Guo2011}
Guo, Jian, Elizaveta Levina, George Michailidis, and Ji~Zhu (2011),
  \enquote{Joint estimation of multiple graphical models.} \emph{Biometrika},
  98, 1--15.
\bibAnnoteFile{Guo2011}

\bibitem[{Han et~al.(2015)Han, Lu, and Liu}]{han15a}
Han, Fang, Huanran Lu, and Han Liu (2015), \enquote{A direct estimation of high
  dimensional stationary vector autoregressions.} \emph{Journal of Machine
  Learning Research}, 16, 3115--3150.
\bibAnnoteFile{han15a}

\bibitem[{Homrighausen and McDonald(2018)}]{homrighausen2018}
Homrighausen, Darren and J.~Daniel McDonald (2018), \enquote{A study on tuning
  parameter selection for the high-dimensional lasso.} \emph{Journal of
  Statistical Computation and Simulation}, 88, 2865--2892.
\bibAnnoteFile{homrighausen2018}

\bibitem[{Horn and Johnson(2012)}]{Horn2012}
Horn, Roger~A. and Charles~R. Johnson (2012), \emph{Matrix Analysis}, 2nd
  edition. Cambridge University Press, New York, NY, USA.
\bibAnnoteFile{Horn2012}

\bibitem[{Jung et~al.(2015)Jung, Hannak, and Goertz}]{Jung2015}
Jung, Alexander, Gabor Hannak, and Norbert Goertz (2015), \enquote{Graphical
  lasso based model selection for time series.} \emph{IEEE Signal Processing
  Letters}, 22, 1781--1785.
\bibAnnoteFile{Jung2015}

\bibitem[{Kock and Callot(2015)}]{kock2015}
Kock, Anders and Laurent Callot (2015), \enquote{Oracle inequalities for high
  dimensional vector autoregressions.} \emph{Journal of Econometrics}, 186,
  325--344.
\bibAnnoteFile{kock2015}

\bibitem[{Koop(2011)}]{koop2011}
Koop, Gary~M (2011), \enquote{Forecasting with medium and large bayesian vars.}
  \emph{Journal of Applied Econometrics}, 28, 177--203.
\bibAnnoteFile{koop2011}

\bibitem[{Kreutz-Delgado(2009)}]{kreutz2009}
Kreutz-Delgado, Ken (2009), \enquote{The complex gradient operator and the
  cr-calculus.}
\bibAnnoteFile{kreutz2009}

\bibitem[{Li et~al.(2015)Li, XingyuWang, and GuoqiangWang}]{li2015}
Li, Lu, XingyuWang, and GuoqiangWang (2015), \enquote{Alternating direction
  method of multipliers for separable convex optimization of real functions in
  complex variables.} \emph{Mathematical Problems in Engineering}, 2015.
\bibAnnoteFile{li2015}

\bibitem[{L{\"u}tkepohl(2007)}]{Lutkepohl2007}
L{\"u}tkepohl, Helmut (2007), \emph{New Introduction to Multiple Time Series
  Analysis}. Springer, New York.
\bibAnnoteFile{Lutkepohl2007}

\bibitem[{Meinshausen and Bühlmann(2010)}]{meinshausen2010}
Meinshausen, Nicolai and Peter Bühlmann (2010), \enquote{Stability selection.}
  \emph{Journal of the Royal Statistical Society: Series B (Statistical
  Methodology)}, 72, 417--473.
\bibAnnoteFile{meinshausen2010}

\bibitem[{Nicholson et~al.(2016)Nicholson, Bien, and Matteson}]{Nicholson2016}
Nicholson, William~B., Jacob Bien, and David~S. Matteson (2016),
  \enquote{Hierarchical vector autoregression.} \emph{Arxiv preprint
  arXiv:1412.5250v2}.
\bibAnnoteFile{Nicholson2016}

\bibitem[{Pourahmadi(2013)}]{pourahmadi2013}
Pourahmadi, Mohsen (2013), \emph{High-Dimensional Covariance Estimation}. John
  Wiley \& Sons, Ltd.
\bibAnnoteFile{pourahmadi2013}

\bibitem[{Remmert(1991)}]{remmert1991}
Remmert, Reinhold (1991), \emph{Theory of Complex Functions}. Springer, New
  York.
\bibAnnoteFile{remmert1991}

\bibitem[{Safikhani and Shojaie(2020)}]{safikhani2020}
Safikhani, Abolfazl and Ali Shojaie (2020), \enquote{Joint structural break
  detection and parameter estimation in high-dimensional nonstationary var
  models.} \emph{Journal of the American Statistical Association}, 0, 1--14.
\bibAnnoteFile{safikhani2020}

\bibitem[{Song and Bickel(2011)}]{Song2011}
Song, S. and P.~J. Bickel (2011), \enquote{Large vector auto regressions.}
  \emph{Arxiv preprint arXiv:1106.3915}.
\bibAnnoteFile{Song2011}

\bibitem[{Songsiri et~al.(2009)Songsiri, Dahl, and Vandenberghe}]{songsiri2009}
Songsiri, Jitkomut, Joachim Dahl, and Lieven Vandenberghe (2009),
  \emph{Graphical models of autoregressive processes}, 89–116. Cambridge
  University Press.
\bibAnnoteFile{songsiri2009}

\bibitem[{Stock and Watson(2005)}]{Stock2005}
Stock, James and Mark Watson (2005), \enquote{An empirical comparison of
  methods for forecasting using many predictors.} \emph{Manuscript, Princton
  University}.
\bibAnnoteFile{Stock2005}

\bibitem[{Stoica and Moses(1997)}]{stoica1997}
Stoica, P. and R.L. Moses (1997), \emph{Introduction to Spectral Analysis}.
  Prentice Hall.
\bibAnnoteFile{stoica1997}

\bibitem[{{Tugnait}(2018)}]{tugnait2018}
{Tugnait}, J.~K. (2018), \enquote{Graphical modeling of high-dimensional time
  series.} In \emph{2018 52nd Asilomar Conference on Signals, Systems, and
  Computers}, 840--844.
\bibAnnoteFile{tugnait2018}

\bibitem[{Whittaker(1990)}]{Whittaker1990}
Whittaker, J. (1990), \emph{Graphical models in applied multivariate
  statistics}. John Wiley \& Sons, Ltd.
\bibAnnoteFile{Whittaker1990}

\bibitem[{Wirtinger(1927)}]{wirtinger1927}
Wirtinger, W. (1927), \enquote{Zur formalen theorie der funktionen von mehr
  komplexen ver{\"a}nderlichen.} \emph{Mathematische Annalen}, 97, 357--375.
\bibAnnoteFile{wirtinger1927}

\bibitem[{Wu and Wu(2016)}]{wu2016}
Wu, Wei-Biao and Ying~Nian Wu (2016), \enquote{Performance bounds for parameter
  estimates of high-dimensional linear models with correlated errors.}
  \emph{Electron. J. Statist.}, 10, 352--379.
\bibAnnoteFile{wu2016}

\bibitem[{Yuen et~al.(2018)Yuen, Wong, and Yiu}]{yuen2018}
Yuen, T.P., H.~Wong, and K.F.C. Yiu (2018), \enquote{On constrained estimation
  of graphical time series models.} \emph{Computational Statistics and Data
  Analysis}, 124, 27 -- 52.
\bibAnnoteFile{yuen2018}

\end{thebibliography}
\bibliographystyle{te}

\end{document}